\newcommand{\be}{\begin{eqnarray}}
\newcommand{\ee}{\end{eqnarray}}
\newcommand{\ba}{\begin{eqnarray*}}
\newcommand{\ea}{\end{eqnarray*}}
\newtheorem{theorem}{Theorem}[section]
\newtheorem{fact}[theorem]{Fact}
\newtheorem{definition}{Definition}[section]
\newenvironment{proof}[1][Proof]{\begin{trivlist}
\item[\hskip \labelsep {\bfseries #1}]}{\end{trivlist}}
\newcommand{\qed}{\nobreak \ifvmode \relax \else
      \ifdim\lastskip<1.5em \hskip-\lastskip
      \hskip1.5em plus0em minus0.5em \fi \nobreak
      \vrule height0.75em width0.5em depth0.25em\fi}
\title{A Nonparametric Bayesian Method for Clustering of High-dimensional Mixed Dataset} % The short title apears at the bottom of every slide, the %full title is only on the title page
\author{Chetkar Jha  \thanks{
\footnotesize Department of Statistics, University of Missouri-Columbia,
\footnotesize \texttt{cjfff@mail.missouri.edu}}  }
\date{\today} % Date, can be changed to a custom date
\begin{document}
\bibliographystyle{plainnat}

\maketitle

\begin{abstract}
\textbf{Motivation}: Advances in next-generation sequencing (NGS) methods  have enabled researchers and agencies to collect a wide variety of sequencing data across multiple platforms. The motivation behind such an exercise is to analyze these datasets jointly, in order to gain insights into disease prognosis, treatment, and cure. Clustering of such datasets, can provide much needed insight into biological associations. However, the differing scale, and the heterogeneity of the mixed dataset is hurdle for such analyses. 

\textbf{Results}: The paper proposes a nonparameteric Bayesian approach called \emph{Gen-VariScan} for biclustering of high-dimensional mixed data. Generalized Linear Models (GLM), and latent variable approaches are utilized to integrate mixed dataset. 
Sparsity inducing property of Poisson Dirichlet Process (PDP) is used to identify a lower dimensional structure of mixed covariates. 
We apply our method to Glioblastoma Multiforme (GBM) cancer dataset. We show that cluster detection is aposteriori consistent, as number of covariates and subject grows. As a byproduct, we derive a working value approach to perform beta regression.
\end{abstract}

{\bf Keywords}: Nonparametric Bayes; Dirichlet process; Poisson Dirichlet process; Mixed data clustering; Biclustering; Beta regression; Generalized Linear Model

%----------------------------------------------------------------------------------------
%	PRESENTATION SLIDES
%----------------------------------------------------------------------------------------

%------------------------------------------------
\section{Introduction}\label{S:intro}
Technological advances in next-generation sequencing (NGS) methods have facilitated researchers to sequence DNA at an unprecedented speed. This has led to proliferation of high resolution genomic data such as transcriptomic data (e.g mRNA expression), epigenomic data (e.g DNA methylation) etc. For instance, The Cancer Genome Atlas (TCGA) has compiled large genomic databases for different tumor types. These databases are sourced from multiple genomic platforms on a common set of samples, and are of different data types. Each of these datasets provide a partly independent and complementary view of the genome \citep{hamid09}. 

A problem of interest in integrative genomics is to analyze these datasets jointly. Specifically, there is a considerable interest in studying biological associations across data types. Biological associations play a crucial role in tumor growth, and studying these interactions can provide for a comprehensive understanding of cancer genetics and molecular biology \citep{lock}. Clustering, an unsupervised approach to group objects into clusters which share common pattern, is often used to find biological associations. \citet{eisen98} first, applied clustering methods to find associations among genes in gene expression data. Since then, clustering methods have been employed to find biological associations. Recently, clustering technique has been applied to discover interactions between biomarkers of continuous (gene expression) and categorical data type (DNA copy number alteration / DNA mutation) (\citet{lee08}, \citet{abidin17}). 

An open problem, in this context, is to perform biclustering or two-way clustering of high-dimensional mixed datasets. 
The advantage of biclustering approach over one-dimensional clustering approach is that clustering is done simultaneously between samples and covariates. In biological context, it means that there may be a group of biomarkers (across data types) that defines the biological process for only a subset of samples \citep{lee13}. In other words, a biclustering approach borrows strength from  local interactions. Several studies have established biological relevance of biclustering methods \citep{biclustrev}. Despite the advantages of biclustering methods, we lack a biclustering method for high-dimensional mixed dataset. In this paper, we propose a biclustering method for integrating high-dimensional mixed dataset with an emphasis on clustering the covariates.

\subsection{Challenges in high dimensional mixed data}
The challenge with high dimensional mixed dataset is two fold. The dataset is heterogeneous and consists of multiple data types (such as continuous, binary etc). It is difficult to combine the information across different data types in a meaningful manner as each data type is on a different scale. Moreover, for biomedical mixed datasets, the number of subjects are relatively small compared to the number of covariates (n $<< $ p). Therefore, one needs to reduce the number of covariates to a smaller number of covariates for subsequent analyses. 

\subsection{Current Approaches and its limitations}
There's an extensive literature for clustering of high dimensional mixed datasets. The popular approaches include  K-means/K-mediods algorithm using Gower's distance and hierarchical clustering approaches. 
Over the years, a number of methods have been developed for clustering of high-dimensional mixed dataset with a focus on clustering the samples. For a detailed review, see \citet{litrev2} and \citet{huang17}. Broadly, these methods could be classified 
as i) Matrix Factorization method, and ii) Model based clustering method.

\emph{Matrix Factorization}: Matrix factorization based approaches makes use of the fact that the data matrix can be written in terms of sparse latent factor matrix. \citet{shen} developed a latent factor approach for integrating multiple datasets of different data types. 
\cite{mo} generalized \citet{shen}'s approach for binary, multicategory, and continuous data types. \citet{JIVE} proposed a latent factor approach (JIVE), where they decomposed the total variation into joint variation and individual variation. 
The limitations of above approaches are that they require normalization across datasets, and assume a linear mapping between the data points and latent factors.

\emph{Model based clustering}: Model based clustering  approaches relaxes the linearity assumption between the latent factors and data point. They incorporate likelihood of data points, and use non-parametric Bayesian method to cluster data points into different groups. 
\citet{savage} proposed a modified version of Hierarchical Dirichlet Process (HDP) \citep{teh05} to jointly model gene-expression and transcription factor binding data. 
\citet{kirk} proposed a general approach to integrate multiple data types, simultaneously. 
\citet{lock} proposed a Bayesian consensus clustering approach. They made use of finite Dirichlet Mixture Models to model each dataset separately. The limitation of above approaches are that it is computationally difficult and expensive for non-normal likelihoods, or where conjugacy can't be easily exploited.

Existing methods in integrative genomics have only been used to identify the subtype or cluster among the patients (or subjects). The other limitation of existing methods is that they do not allow for biclustering (simultaneous) clustering of samples and covariates. With an aim to overcome above limitations, we propose a nested partition model for high-dimensional mixed dataset. 

%The main goal of our analysis is to find blocks of rows (subjects) and columns (covariates) , which can borrow information across rows, columns, and multiple data types to meaningfully integrate high-dimensional mixed data to find concurrent groups of biomarkers (they would be correlated, as well). The integrated high-dimensional mixed data, could be used by researchers for subsequent analysis, whereas, the clusters of biomarkers can help us discover key biological association. With the above inferential goals in mind, we propose a nested partition model for high-dimensional mixed dataset. 

\subsection{Nested Partition Models} Product Partition models were first studied by \citet{hartigan}. \citet{quintana03} and \citet{quintana06} defined product partition models in a nonparametric Bayesian set up. Let $\{x_{ij} \}, i =1, \cdots, n, j = 1, \cdots, p$ be a data matrix, where each row denotes a sample and each column denotes a covariate. Then, the product partition model in a nonparametric Bayesian set up, can be given as below.
\be\label{partition.model}
f(x_{11}, \cdots x_{np}| \theta_{11} \cdots \theta_{np}, \psi)  &=& \Pi_{i=1}^n \Pi_{j=1}^p p(x_{ij} | \theta_{ij}, \psi), \nonumber \\
\theta_{11},\cdots \theta_{np} | G, \psi &\sim& G(. | \psi ), \nonumber \\
 G &\sim& RPM( M G_0 ),
\ee
where $\theta_{ij}$ is a parameter, $\psi$ is a hyperparameter,  M is concentration parameter, G is a mixing distribution, and RPM is a random probability measure. 

The mixing distribution G is almost surely discrete which facilitates the observations to group into clusters and share common $\theta_{ij}$'s. This allows one to naturally infer the number of clusters. Also, the random partition model is exchangeable under the permutation of cluster indices \citep{muller11}.

 %\citet{medvedovic02} developed a product partition model
%for clustering of gene expression data. They also proposed a Bayesian model averaging approach to estimate the cluster allocation. \citet{dahl} developed a least square allocation method to get a pointwise estimate of cluster allocation.

The product partition models could be generalized to nested partition models. Recently, many authors have proposed such models (for e.g \citet{rodriguez08}, \citet{rodriguez12}, \citet{lee13}). In nested partition models, the data matrix ($\{x_{ij}\}, i =1, \cdots, n, j=1, \cdots, p$ ) are clustered at two levels, which allows the model to borrow strength from local interactions. 
\citet{lee13} proposed a nonparameteric Bayesian model for clustering of RRPA data. They proposed nested clustering of covariates using Dirichlet Process (DP) both at column and row level. \citet{guhaveera} argued that nested Dirichlet Process (DP) with Poisson Dirichlet Process (PDP) leads to more flexible clustering. They applied their method to gene expression data. \citet{xu13} modified \citet{lee13}'s biclustering approach for histone modification data. Reverse Phase Protein Array data, gene expression data, and histone modification data are continuous, continuous, and count data respectively. A limitation of above methods is that they can't be applied to a mixed dataset, consisting of continuous and categorical data type. We fill this gap and extend \citet{guhaveera}'s approach to mixed datasets.

\emph{Gen-VariScan} is a clustering method which biclusters (simultaneously clusters) mixed data matrix $\{ x_{ij} \},i = 1, \cdots, n, j = 1, \cdots p$. The method can be described in two steps, namely:- i) regression step, ii) clustering step. 
We start with an initial cluster label for every element in the data matrix.
In regression step, we perform regression analysis for every column vector where the dependent variable is original column vector and the independent variables are cluster label vector. The regression method depends on data type of the column vector, see Figure \ref{F:1}. The coefficient of cluster label vector is a latent vector, which is continuous. 
In clustering step, the latent vectors are clustered to q ($ q < p $) PDP clusters. Subsequently, the unique elements of all latent vectors ($\theta_{ik}, i = 1\cdots n, k=1 \cdots q) $ are clustered using DP. 
It is worth noting that one can easily cluster the continuous latent vector, in comparison to, the original column vector (which can be continuous/ discrete). Here, the biclustering of (continuous) latent vectors is used as proxy for biclustering of mixed dataset. 
 Moreover, for latent variable approaches (\citet{chib}) and GLM (\citet{glm}), the latent continuous vector
is approximately normal. This allows us to use conjugacy properties and gain computational efficiency (see Section \ref{S:model}).
The column intercept is used to center each covariate column (across data type). 
Figure \ref{F:2} illustrates our method. There are 4 datasets of different data types namely:- mutation, copy number alteration, methylation, and gene expression datasets. In total, there are n= 6 samples, and p=10 covariates. Clusters of individual cells are denoted by pattern and column level clusters are encoded by colors.

\begin{figure}[!htb]
\centering
\begin{tikzpicture}
\tikzstyle{main}=[rectangle, minimum size = 20mm, thick, draw =black!100,rounded corners, node distance = 5mm]
\tikzstyle{inner0}=[circle, minimum size = 20mm, thick, draw =black!100, node distance = 15mm]
\tikzstyle{inner}=[circle, minimum size = 20mm, thick, draw =black!100, node distance = 15mm]
\tikzstyle{connect}=[-latex, thick]
\tikzstyle{dotted}=[-latex, ultra thin, dash dot]
\node[main, fill = black!10] (c2) [label={[align = center] center:\tiny{\textbf{Linear Reg } } \\ $x_{ij}$  }] { };
\node[main, fill = black!10] (c1) [right=of c2,label={[align = center] center:\tiny{\textbf{Binary Reg} } \\$x_{ij}$  }] { };
\node[main, fill = black!10] (c3) [left=of c2,label={[align = center] center:\tiny{ \textbf{Ordinal Reg} }\\ $x_{ij}$  }] { };
\node[main, fill = black!10] (c4) [right=of c1,label={[align = center] center:\tiny{\textbf{Beta Reg} }\\ $x_{ij}$  }] { };
\node[main, fill = black!10] (c5) [left=of c3,label={[align = center] center:\tiny{ \textbf{Poisson Reg} }\\ $x_{ij}$  }] { };
\node[inner, fill = black!10] (b1) [below =of c1,label={[align = center] center:\tiny{\textbf{Intercept}} }] { };
\node[inner, fill = black!10] (b2) [below =of c2,label={[align = center] center:\tiny{\textbf{Intercept} }  }] { };
\node[inner, fill = black!10] (b3) [below  =of c3,label={[align = center] center:\tiny{\textbf{Intercept}}  }] { };
\node[inner, fill = black!10] (b4) [below =of c4,label={[align = center] center:\tiny{\textbf{Intercept}}   }] { };
\node[inner, fill = black!10] (b5) [below =of c5,label={[align = center] center:\tiny{\textbf{Intercept}}  }] { };
\node[inner0, fill = black!10] (th) [below left = of b1,label={[align = center] center:\tiny{\textbf{Coefficient}} \\$\theta_{ik}$ } ] { };
\path(th) edge [dotted] (c1)
(th) edge [dotted](c2)
(th) edge [dotted](c3)
(th) edge [dotted](c4)
(th) edge [dotted](c5)
(b1) edge [dotted](c1)
(b2) edge [dotted](c2)
(b3) edge [dotted](c3)
(b4) edge [dotted](c4)
(b5) edge [dotted](c5);
\end{tikzpicture}
\caption{A visual of data integration in \emph{Gen-VariScan}.}
\label{F:1}
\end{figure}
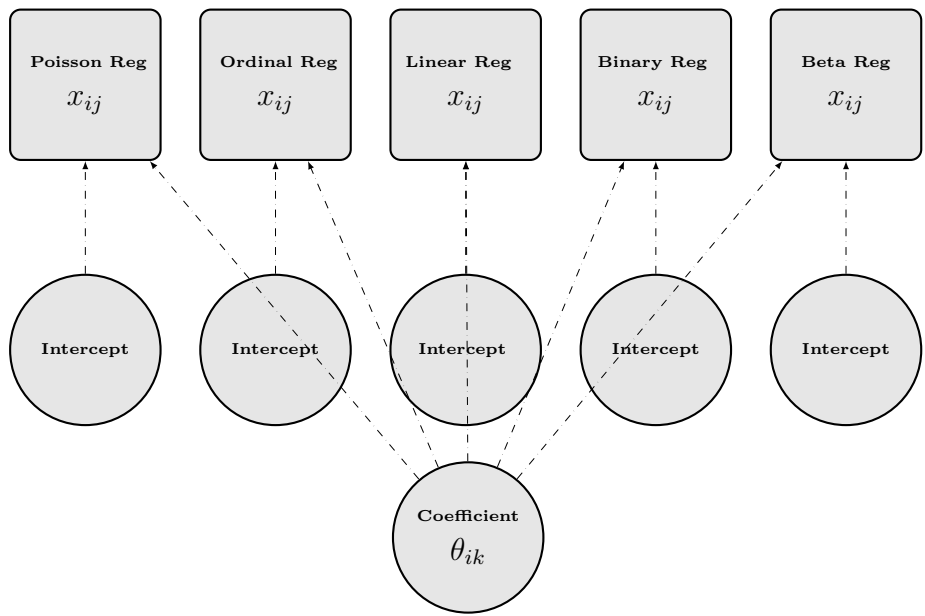
\begin{figure}[!htb]
\centering
\includegraphics[width=15cm]{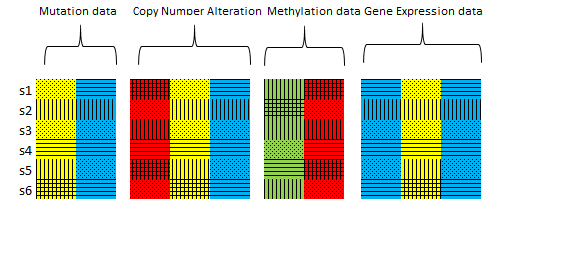}
\caption{An illustration of \emph{Gen-VariScan}. Each row corresponds to sample and each column corresponds to covariate. The figure shows that covariate clusters can be shared across different datasets, which are of different data types.}
\label{F:2}
\end{figure}

\emph{Gen-VariScan} has following advantages over other integrative approaches. It doesn't normalize the datasets and therefore can effectively capture heterogeneity in the datasets. It borrows strength from local clustering, which leads to more flexible clustering. It is computationally efficient  as it exploits the conjugacy between the density of latent variable and the base distribution of DP. The main contribution of this paper are as follows:- 1) We extend \citet{guhaveera}'s approach to mixed datasets, 2) We prove that cluster allocation is aposteriori consistent, 3) As a byproduct, we derive a working value approach for beta regression.

We have drawn motivation for our model from high through-put biomedical set up but our method can be used elsewhere to perform i) data integration, ii) biclustering (simultaneous clustering) of mixed datasets, iii) dimension reduction. The rest of the paper is organized as follows. In Section \ref{S:beta.reg}, we propose a working value approach to do beta regression which is used subsequently in biclustering of mixed dataset. In Section \ref{S:model}, we describe our model. In Section \ref{S:post.inf}, we describe posterior inference for implementing our model. In Section \ref{S:clust.consistency}, we discuss methods to evaluate our model. In Section \ref{S:simulation}, we perform simulation analyses of the model. In Section \ref{S:data}, we apply the model on a real dataset and discuss the results. Section \ref{S:conclusion} summarizes our paper. Supplementary materials contain the theorem proofs, as well as additional data analysis results.

\section{Beta Regression}\label{S:beta.reg}
Beta regression has received considerable attention in recent years. \citet{ferrari} reparametrized  beta density and proposed a classic beta regression model for constant dispersion parameter. \citet{branscum} proposed Bayesian beta regression. \citet{simas} proposed a generalized model where the dispersion parameter is not constant. \citet{mixedbeta} proposed mixed beta regression. The limitation with above approaches is that they can't be readily used in an ensemble set up (such as ours), where one borrows strength from multiple models. A workaround would be to find a working value approach for beta regression \citep{glm}. Recently, \citet{bayesbeta} proposed a working value based approach for Bayesian beta regression. However, their approach isn't numerically stable. We explain this later in the section. But, first, let us define beta density.
\subsection{Beta Density}
A random variable y, follows a beta distribution, whose probability density function is given as below.
\ba
f(y | \alpha, \beta) = \frac{ \Gamma( \alpha + \beta ) }{ \Gamma(\alpha) \Gamma(\beta)} y^{ \alpha - 1} (1-y)^{\beta - 1}, 
\ea
where $\alpha >0 , \beta > 0$ are shape parameters and $ 0 \le y \le 1$. \citet{ferrari} suggested a reparameterization of the beta density in terms of $\mu$ (the mean parameter) and $\varphi$ (the dispersion parameter).
\ba
\mu= \frac{ \alpha }{ \alpha + \beta },\\
\varphi = \alpha + \beta.
\ea
The reparametrized beta density is given as below.
\be
f(y | \mu, \varphi) = \frac{ \Gamma(\varphi ) }{ \Gamma(\mu\varphi - 1) \Gamma((1-\mu)\varphi)} y^{ \mu\varphi - 1} (1-y)^{ (1-\mu)\varphi - 1}, 
\ee
where $0 < \mu < 1 , \varphi > 0$ are mean and disperson parameter, respectively.

\citet{bayesbeta} made use of working variable approach which is commonly carried out to implement Generalized Linear Models (GLM), for details see \citep{glm}. Their proposed working value is given as below.
\be\label{beta.div}
\tilde{\bf{y}} = \bf{x}^{'} \beta^{(c)} + \frac{ y - \mu^{(c)} }{ (\mu^{(c)} )(1- \mu^{(c)} ) },
\ee
\begin{figure}[!htb]
\centering
\includegraphics[width=12cm, height=6cm]{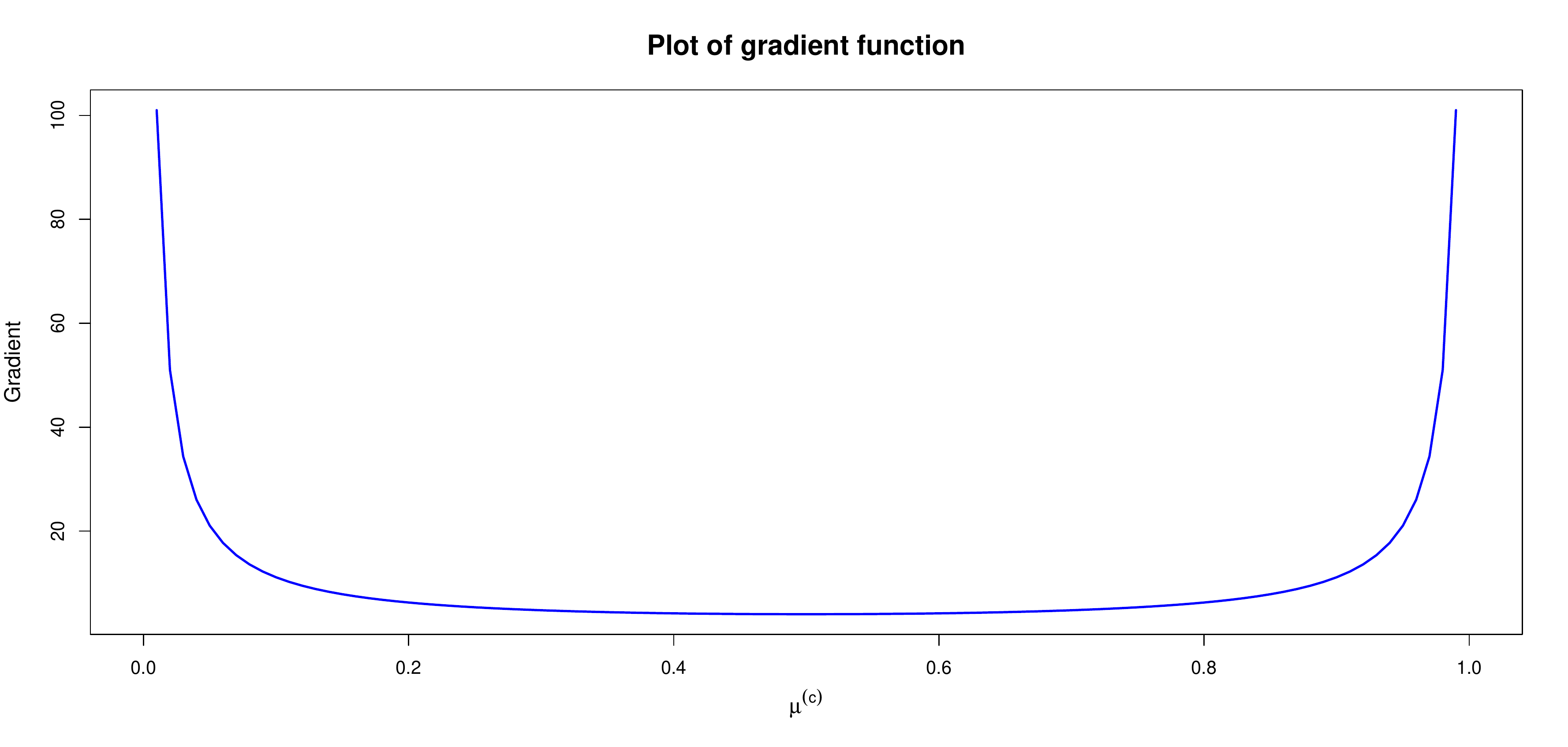}
\caption{Plot of gradient  in equation (\ref{beta.div}) .}
\label{F:3}
\end{figure}
where {\bf{y}} denotes observation, $\tilde{\bf{y}}$ denotes working value, x denotes the covariates for regression, ${\bf{\mu^{(c)}}}$ and ${\bf{\beta^{(c)}}}$ denote the current value of ${\bf{\mu}}$ and ${\bf{\beta}}$ (coefficients) respectively. The limitation of above approach is that it doesn't work well  when ${\bf{\mu^{(c)}}}$ is moderately close to 0 or 1. The gradient diverges and therefore working value approach becomes numerically unstable, see Figure \ref{F:3}. 

\subsection{Proposed Beta Regression}
Let $y_1,\cdots, y_n$ follow beta distribution. Let $x_1,\cdots,x_p$ be the covariates and $\beta_1,\cdots \beta_p$ denotes the coefficients. Also, let $\mu$ denotes the mean of $y_i$ for i = $1 ,\cdots, n$ and, the link function be given as below.
\be
g(\mu) = \Sigma_{i=1}^p x_i \beta_i,
\ee 
where g is the link function. 

The link function g, maps from [0,1] $\to$ R. There are a number of choice for link functions, for example $g(\mu) = F^{-1}(\mu)$, where F is any cumulative distribution function. One may also specify the link function as the complementary log link $g(\mu) = log( - log(1-\mu) )$, the log-log link $g(\mu)= -log(-log(\mu))$ among others. We specify the link function as below.
\be 
g(\mu) = log( \frac{\mu}{1-\mu} ).
\ee
The above link function has a natural interpretation in terms of odds ratio. For detailed discussion on link functions, see  \citet{glm}.
\begin{theorem}\label{GLM.beta}
The score function for the reparametrized beta density is given as
\be
 u_j=\Sigma W (y^{\star} - \mu^{\star} ) \frac{ d \eta}{d \mu^{\star} } x_j.
\ee
Furthermore, the score function satisfies following regularity conditions :-
\be\label{reg.cond}
u_j=\Sigma W (y^{\star} - \mu^{\star} ) \frac{ d \eta}{d \mu^{\star} } x_j &=& 0, \nonumber \\
A \delta b &=& u,
\ee
where b denotes the current estimate of $\beta$ and $\delta b$ denotes the adjustment. \\
Then, the working value for beta regression is given as
\be\label{wkng.val}
z = \eta + \frac{ (y^{*} - \mu^{*} )}{ \varphi*( \psi^{'}( \mu \varphi ) + \psi^{'}( (1-\mu) \varphi )  )*(\mu*(1-\mu) ) },
\ee
where $y^{\star} = log( \frac{y}{1-y})$, $\mu^{*} = \psi( \mu \varphi) - \psi( (1-\mu)\varphi )$ , $\psi$ denotes digamma function, $\psi^{'}$ denotes trigamma function, and A is the fisher's information matrix  ($A_{jk} = - E( \frac{du_j}{d\beta_k} ))$.
\end{theorem}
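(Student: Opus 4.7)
The plan is to follow the standard Fisher-scoring derivation of a GLM working response, adapted to the reparametrized beta density with logit link $\eta = \log(\mu/(1-\mu))$. First I would write out the single-observation log-likelihood
\[\ell(\mu,\varphi) = \log\Gamma(\varphi) - \log\Gamma(\mu\varphi) - \log\Gamma((1-\mu)\varphi) + (\mu\varphi - 1)\log y + ((1-\mu)\varphi - 1)\log(1-y),\]
and differentiate in $\mu$ using $\frac{d}{d\mu}\log\Gamma(\mu\varphi) = \varphi\,\psi(\mu\varphi)$. The derivative collapses to $\partial\ell/\partial\mu = \varphi(y^{\star}-\mu^{\star})$, which already exhibits $y^{\star}=\log(y/(1-y))$ and $\mu^{\star}=\psi(\mu\varphi)-\psi((1-\mu)\varphi)$ exactly as in the theorem. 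Chain-ruling through $\mu \to \eta \to \beta_j$ gives $u_j = \sum \varphi(y^{\star}-\mu^{\star})(d\mu/d\eta)\,x_j$, which I then rearrange into the announced form $\sum W(y^{\star}-\mu^{\star})(d\eta/d\mu^{\star})\,x_j$ by absorbing the residual Jacobian factors into the weight $W$.

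Next I would verify the regularity conditions. For the unbiased estimating equation, the classical beta moment identities $E[\log Y] = \psi(\mu\varphi)-\psi(\varphi)$ and $E[\log(1-Y)] = \psi((1-\mu)\varphi)-\psi(\varphi)$ give $E(y^{\star}) = \mu^{\star}$, so $E(u_j)=0$ at the true $\beta$. For the Fisher information $A_{jk}=-E(\partial u_j/\partial\beta_k)$, the only stochastic term in $u$ is linear in $y^{\star}$, so the expectation collapses to $A = X'\,\tilde W\,X$ with diagonal weights $\tilde W_i = \varphi^{2}\,\mathrm{Var}(y^{\star})\,(d\mu/d\eta)^{2}$. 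The second-order beta identity $\mathrm{Var}(y^{\star}) = \psi'(\mu\varphi)+\psi'((1-\mu)\varphi)$ makes these weights explicit, so the Newton step $A\,\delta b = u$ is well defined.

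Finally, to extract the working value I rewrite the Newton step in weighted-least-squares form: find $z$ so that $u = X'\,\tilde W\,(z-\eta)$, which uncouples coordinate-wise as $z_i - \eta_i = u_i/\tilde W_i$. Substituting the explicit score entry $\varphi(y^{\star}-\mu^{\star})(d\mu/d\eta)$, the weight $\tilde W_i$ above, and the logit-link Jacobian $d\mu/d\eta = \mu(1-\mu)$ telescopes to
\[z = \eta + \frac{y^{\star}-\mu^{\star}}{\varphi\,(\psi'(\mu\varphi)+\psi'((1-\mu)\varphi))\,\mu(1-\mu)},\]
matching (\ref{wkng.val}). The main obstacle is chain-rule bookkeeping rather than any conceptual difficulty: there are two distinct derivative paths ($\mu\to\eta\to\beta$ and $\mu\to\mu^{\star}$) that must be tracked consistently so the score appears in the $(y^{\star}-\mu^{\star})$ form, and one must remember to use the joint trigamma identity for $\mathrm{Var}(y^{\star})$ rather than naively differencing the marginal variances of $\log Y$ and $\log(1-Y)$.
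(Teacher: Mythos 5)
Your derivation is correct and follows the same route the paper takes: the standard McCullagh--Nelder Fisher-scoring/IRLS argument applied to the Ferrari--Cribari-Neto parametrization, using $\partial\ell/\partial\mu = \varphi(y^{\star}-\mu^{\star})$, the digamma/trigamma moment identities $E(y^{\star})=\mu^{\star}$ and $\mathrm{Var}(y^{\star})=\psi'(\mu\varphi)+\psi'((1-\mu)\varphi)$, and the logit-link Jacobian $d\mu/d\eta=\mu(1-\mu)$ to recast the Newton step as weighted least squares on the working variate. Your weights $\tilde W_i = \varphi^{2}\,\mathrm{Var}(y^{\star})\,(\mu(1-\mu))^{2}$ also reproduce the approximate variance of $z$ stated after the theorem, so the argument is complete.
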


\begin{proof}
In Appendix A.1 .
\end{proof}

\begin{figure}[h]
\centering
\includegraphics[width=12cm,height=6cm]{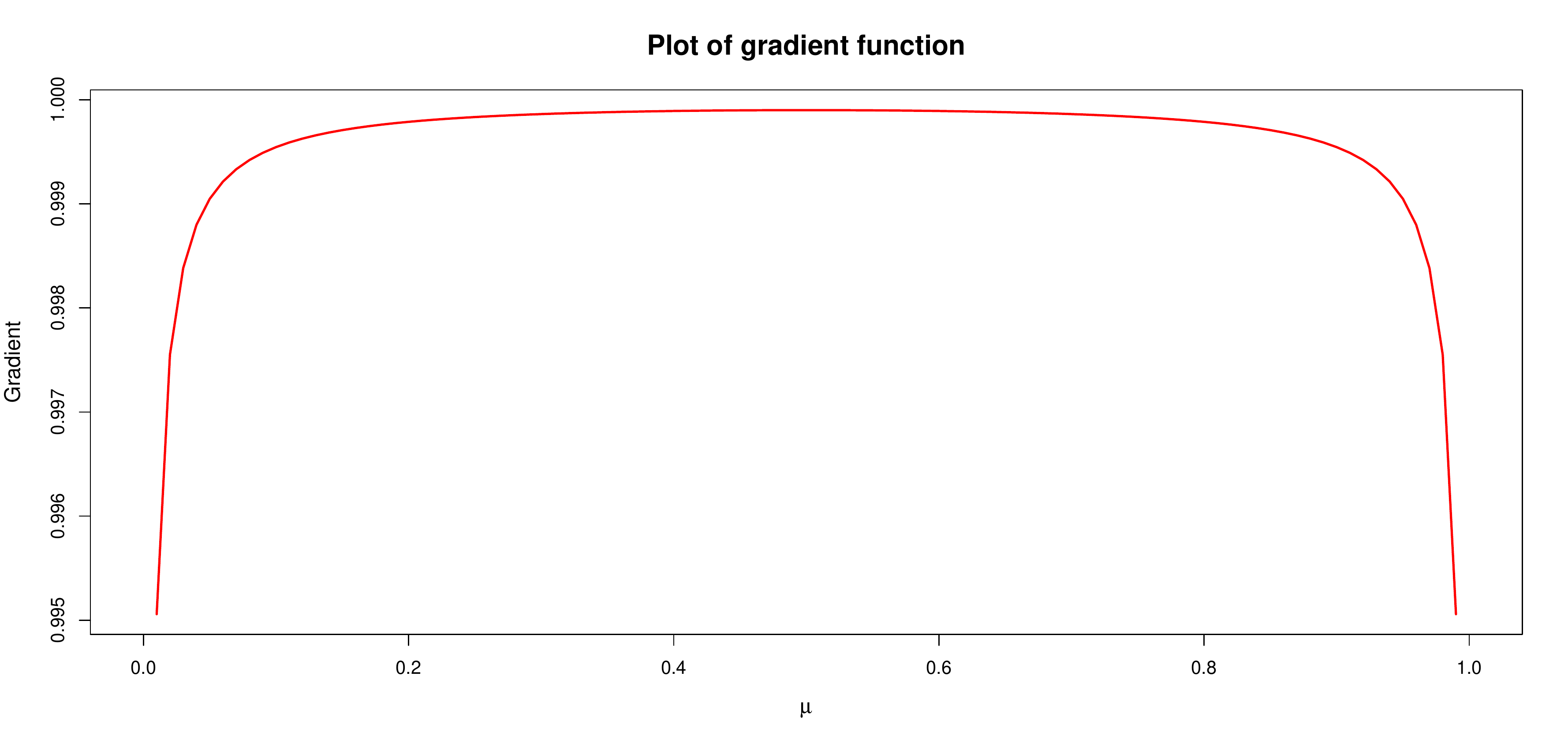}
\caption{The plot of gradient in equation (\ref{wkng.val}) when $\varphi=10,000$.}
\label{F:4}
\end{figure}

\begin{fact}\label{wkng.grad}
The gradient of working value proposed in Theorem \ref{GLM.beta} is bounded from above, and the upper bound is given as below
\be
\frac{ d \eta} {d \mu^*} =\frac{1}{ \varphi*( \psi^{'}( \mu \varphi ) + \psi^{'}( (1-\mu) \varphi )  )*(\mu*(1-\mu) ) } \le \frac{\varphi}{2} < \infty .
\ee
\end{fact}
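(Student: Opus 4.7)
The plan is to bound the denominator in the definition of $d\eta/d\mu^{*}$ from below by a quantity of order $2/\varphi$, from which the claimed upper bound $\varphi/2$ follows by reciprocation. The only nontrivial ingredient is a crude but convenient lower bound on the trigamma function; the rest is elementary calculus on $\mu\in(0,1)$.

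First I would recall the series representation $\psi'(x)=\sum_{k=0}^{\infty}1/(x+k)^{2}$, valid for $x>0$. Retaining only the $k=0$ term yields the inequality $\psi'(x)\ge 1/x^{2}$. Applying this separately at $x=\mu\varphi$ and $x=(1-\mu)\varphi$ gives
\[
\psi'(\mu\varphi)+\psi'((1-\mu)\varphi)\;\ge\;\frac{1}{\mu^{2}\varphi^{2}}+\frac{1}{(1-\mu)^{2}\varphi^{2}}.
\]

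Next I would multiply through by $\varphi\,\mu(1-\mu)$ and simplify to obtain
\[
\varphi\bigl(\psi'(\mu\varphi)+\psi'((1-\mu)\varphi)\bigr)\mu(1-\mu)\;\ge\;\frac{1}{\varphi}\left[\frac{1-\mu}{\mu}+\frac{\mu}{1-\mu}\right]\;=\;\frac{1}{\varphi}\left[\frac{1}{\mu(1-\mu)}-2\right].
\]
Since $\mu(1-\mu)$ attains its maximum $1/4$ on $(0,1)$ at $\mu=1/2$, the bracketed term is minimized there with value $4-2=2$. Inverting the resulting lower bound $2/\varphi$ on the denominator yields $d\eta/d\mu^{*}\le \varphi/2$, and this is obviously finite for any fixed $\varphi<\infty$.

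I do not anticipate a serious obstacle. The only step a referee might scrutinize is the trigamma inequality, but $\psi'(x)\ge 1/x^{2}$ is immediate from the positive-term series expansion. The bound is precisely what resolves the numerical instability flagged after equation (\ref{beta.div}): unlike the gradient $1/[\mu(1-\mu)]$ in \citet{bayesbeta}, which blows up as $\mu\to 0$ or $\mu\to 1$, the proposed gradient in (\ref{wkng.val}) stays uniformly controlled in $\mu$.
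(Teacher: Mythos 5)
Your proof is correct: the trigamma lower bound $\psi'(x)\ge 1/x^{2}$ (immediate from the positive series $\psi'(x)=\sum_{k\ge 0}(x+k)^{-2}$) combined with $\frac{1-\mu}{\mu}+\frac{\mu}{1-\mu}\ge 2$ gives the denominator lower bound $2/\varphi$, hence the stated bound $\varphi/2$, and this is essentially the same route the paper takes — the constant $2$ in $\varphi/2$ arises precisely from this chain of inequalities. No gaps.
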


\begin{proof}
In Appendix A.2.
\end{proof}

Theorem \ref{GLM.beta} gives the working value for beta regression. Figure \ref{F:4} and Fact \ref{wkng.grad} show that the gradient in Theorem \ref{wkng.val} is bounded. Further, the approximate density of z is given as below.
\be
z| y  \sim N( X\beta , \frac{1}{ \varphi^2 ( \mu (1-\mu) )^2 (\psi^{'}(\mu \varphi ) + \psi^{'}( (1-\mu)\varphi ) ) } ) ,
\ee
where $\psi^{'}$ denotes the trigamma function.

In this section, we proposed a working value approach for beta regression. Our working value approach could be used in any other settings to perform Bayesian beta regression.
We use above result in proposing a nonparameteric Bayesian model for mixed datasets, which includes proportion data. 

\section{Model}\label{S:model}
Suppose we take continuous, binary, ordinal, count, and proportion measurements on p biomarkers on n patients. These measurements can be organized in a data matrix \textbf{X}, with n rows and p columns  (n $<<$ p), where each row is a sample and each column a covariate. Further, we assume that each covariate belongs to one of the five data types, namely:-  binary, ordinal, count, proportion, and continuous data type. 

\emph{Gen-VariScan} consists of three steps i) \emph{Allocation Variable} ii) Latent Vectors iii) Data Augmentation. Figure \ref{F:5} gives the visual representation of the model.

\subsection{Allocation Variable}
Let $S = \{ \textbf{x}_1 , \cdots ,\textbf{x}_ p \}$ denote the set of p covariates, where $\textbf{x}_j =\{ x_{j1} \cdots x_{jn}\}$, the $j^{th}$ covariate vector denotes measurement taken on n patients. A  partition of set S yields q $(q < p)$ disjoint subsets: $S_1, \cdots S_q$, such that $\cup_{i=1}^q S_i = S$, and $S_i \cap S_i' = \emptyset, i \neq i'$.  Let $c_j$ denote the cluster membership of $\textbf{x}_j$ to $\{S_1, \cdots, S_q\}$, i.e, $c_j = k$ means $\textbf{x}_j \in k^{th}$ cluster, $j= 1 \cdots p$, $k=1 \cdots q$. We refer to $c_j$ as \emph{allocation variable}, where  j= $1 \cdots p$. 

Following \citet{guhaveera}, we put a two parameter Poisson Dirichlet Process prior (PDP$(M_1, d)$) on allocation variable, where discount parameter $0 \le d < 1$ and precision or mass paramter $M_1 > 0$.
\citet{perman92} introduced PDP and later \citet{pitman95}, \citet{pitman97} studied it further. Further, PDP and DP priors were generalized as Gibbs-type priors in \cite{gnedin05}. 

\citet{guhaveera} gave theoretical and empirical justifications for putting PDP prior. The case for PDP prior can be made as follows. i) DP prior is a specific case of PDP prior (when d =0). ii) A PDP prior has sparsity inducing property which effectively reduces the number of cluster. Asymptotically, the number of cluster for PDP and DP prior can be given as below, see \citet{guhaveera}.
\[
\begin{cases}
M_1 * log(p) & \text{if } d =0, \\
T_{d,M_1}* p^d & \text{if } 0 < d < 1,
\end{cases}
\]
where $T_{d,M_1} > 0$ as $p \to \infty$.

Since, the PDP assumes exchangability therefore the allocation labels are arbitrary. Without loss of generality, we could assign first covariate (${\bf{x}}_1$) into first cluster, i.e, $c_{1} = 1 $. Thereafter, let's say for j =2, $\cdots$ , p covariate we have $q^{j-1}$ number of unique clusters among $c_{1}, c_{2}, \cdots ,c_{j-1}$  where  $k^{th}$ cluster contains $n_k^{j-1}$ number of covariates. Then the conditional probability that $j^{th}$ covariate is assigned to $k^{th}$ cluster is
\be
P(c_j = k | c_1, \cdots ,c_{j-1} ) \propto
\begin{cases} 
n_k^{j-1}  - d & \text{if }  k = 1, \cdots , q^{j-1}, \\
\alpha_1 + q^{j-1}.d & \text{if }  k = q^{j-1} + 1. 
\end{cases}
\ee

The PDP discount parameter d is given the mixture prior $\{ \frac{1}{2} \delta_0 + \frac{1}{2}U(0,1) \}$, where $\delta_0$ denotes 
the point mass at 0. Posterior inferences of d allows us to select between Dirichlet Process and PDP, which allows for flexible clustering.

\subsection{Latent Vectors}
Let $\theta_k$ denote the column vector with elements $(\theta_{1k} , \cdots \theta_{nk} )$, where $k = 1 \cdots q$ and $q$ is number of column clusters.
We put a prior on the latent vectors $ \theta_1, \cdots, \theta_q$, i.e, $ \theta_i \sim G^{(n)} $, where $G^{(n)}$ specifies a distribution in $R^{(n)}$. Following \citet{guhaveera}, we write $G^{(n)}$ as n-fold product  measure of univariate distribution. This essentially imposes a lower dimensional structure on $\theta_k$'s. Further, the prior helps us create nested clustering of subjects within covariates, which can capture local clustering and borrow strength across samples (patients) and covariates (biomarkers).
\be
\theta_{ik} \sim G ,  i =1,\cdots,n  , k = 1, \cdots, q.
\ee
The unknown density is given a Dirichlet Process prior
\be
 G  \sim DP(M_2 G_0 ),
\ee
where the mass parameter M $ > $ 0 and the base distribution $G_0$ as a univariate normal distribution N($\mu_2$, $\tau_2^2$).

\subsection{Data Augmentation}
Posterior sampling in discrete regression models can be computationally challenging. \citet{chib} implemented data augmentation approach for binary and ordinal regression. Bayesian approaches in generalized linear models (GLM) makes use of working value approaches, see \citet{mallick}, \citet{guha08}. Recently, nonparameteric Bayesian methods along with GLM and data augmentation approaches are used for discrete regression models. \citet{atisso} studied nonparametric Bayesian methods for binary regression.
\citet{hannah} used GLM with Dirichlet Process (DP) for regression. \citet{deyoreo} used latent variable approch in nonparametric Bayesian setup for ordinal regression.

We introduce a latent vector vector $\textbf{z}_j$ correspond to every covariate vector $\textbf{x}_j, j = 1 \cdots p$. We define $\textbf{z}_j$  for cases, where $\textbf{x}_j$ belongs to one of the five data types, namely:- binary, ordinal, count, proportion and, continuous. For simplifying notations, we specify $i=1, \cdots, n, j=1, \cdots, p, k=1, \cdots, q, k=1, \cdots, q^{(0)}$, $q$ denotes the estimated number of column cluster and $q^{(0)}$ denotes the true number of column cluster, for all the five data types.

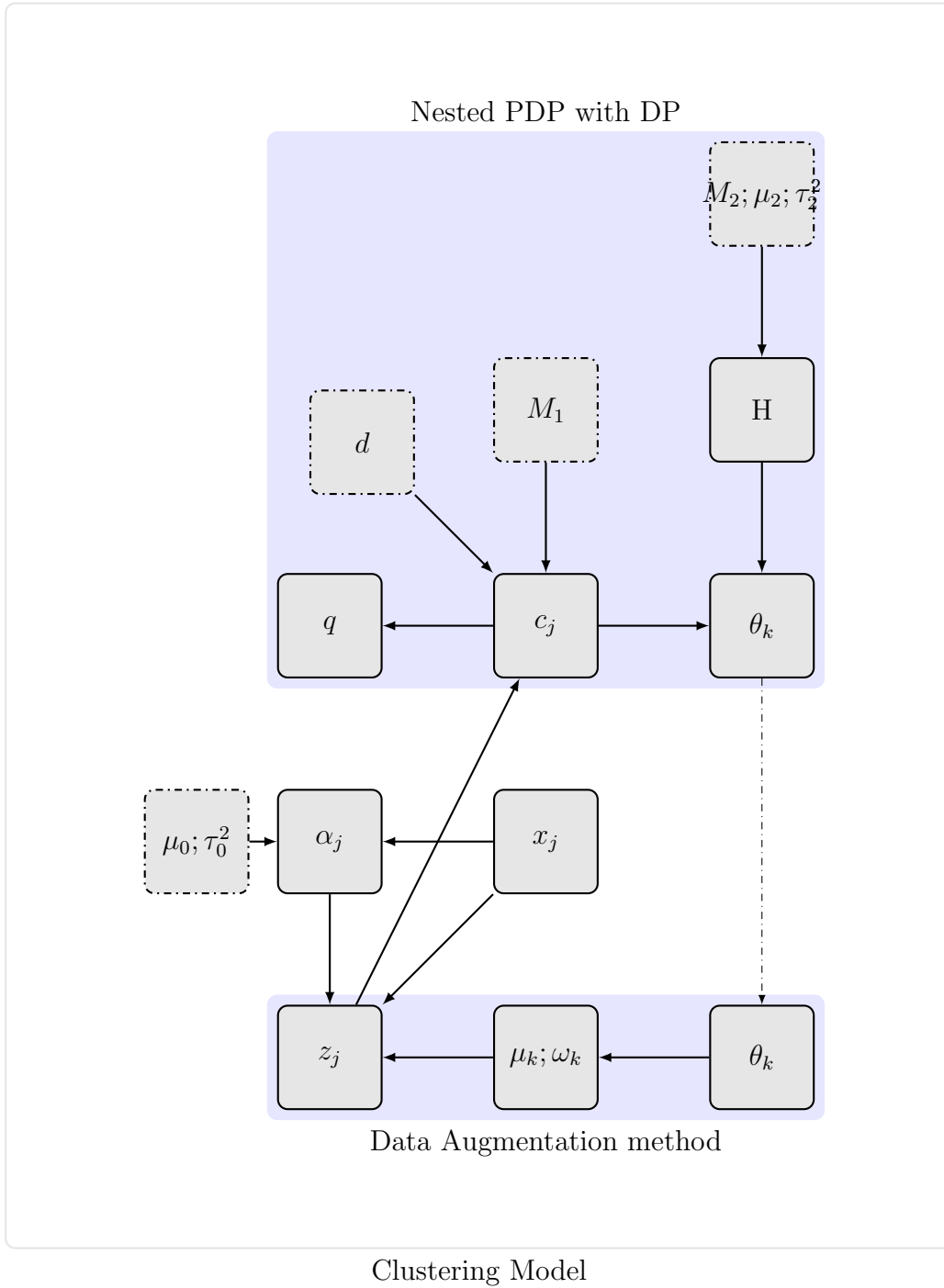
\begin{figure}[H]
\centering
\begin{tikzpicture}
\tikzstyle{main}=[rectangle, minimum size = 15mm, thick, draw =black!100, node distance = 16mm,rounded corners]
\tikzstyle{inner}=[rectangle, minimum size = 15mm, thick, draw =black!100, node distance = 16mm, dash dot,rounded corners]
\tikzstyle{connect}=[-latex, thick]
\tikzstyle{dotted}=[-latex, thin, dash dot]
\tikzstyle{box}=[rectangle, draw=black!100,rounded corners]
\tikzstyle{bigbox1} = [fill=blue!10, draw=blue!10, rounded corners,rounded corners, rectangle]
\tikzstyle{bigbox2} = [fill=blue!10, draw=blue!10 , rounded corners,rounded corners, rectangle]
\tikzstyle{bigbox3} = [inner sep =20mm,draw=black!10, thick, rounded corners, rectangle]
\node[main, fill = black!10] (x) [label=center:$x_{j}$] { };
  \node[main, fill = black!10] (c) [above=of x,label=center:$c_j$] { };
\node[inner, fill=black!10](M)[above= of c, label = center:$M_1$]{};
\node[inner, fill=black!10](d)[above left= of c, label = center:$d$]{};
 \node[main, fill = black!10] (theta) [right=of c,label=center:$\theta_{k}$] { };
\node[main, fill = black!10] (mu) [below=of x,label=center:$\mu_{k};\omega_{k}$ ] { };
 \node[main, fill = black!10] (th) [right=of mu,label=center:$\theta_{k}$] { };
 \node[main, fill = black!10] (H) [above=of theta,label=center:H] { };
  \node[main, fill = black!10] (z) [left=of mu, label=center:$z_{j}$] {};
\node[main, fill = black!10] (alpha) [ left=of x,label=center:$\alpha_{j}$] { };
%\node[main, fill = black!10] (zstar) [right=of x,label=center:$z^{\star}_{ij}$] { };
\node[main, fill = black!10] (q) [left=of c,label=center:$q$] { };
  \node[inner, fill = black!10] (w) [above=of H,label= center:$M_2; \mu_2; \tau_2^2$] { };
    \node[inner, fill = black!10] (prior) [left=of alpha,label= center:$\mu_0; \tau_0^2$,xshift =12mm] { };
\path(d) edge [connect] (c)
      (c) edge[connect](theta)
      %(c) edge[connect](x)	
     (th) edge[connect](mu)
(theta)edge[dotted](th)
 (mu) edge[connect](z)
 (x) edge[connect](z)
(x) edge[connect](alpha)
(alpha) edge[connect](z)
(H) edge[connect](theta)
%(z) edge[connect](zstar)
(c) edge[connect](q)
(z) edge[connect](c)
       (M) edge [connect] (c)
(w) edge[connect](H)
(prior)edge[connect](alpha); 
\begin{pgfonlayer}{background}
  \node[bigbox3] [fit = (theta)(z)(alpha)(prior)(w)(d),label=below :Clustering Model] {};
  \node[bigbox2] [fit = (th) (mu)(z), label = below :Data Augmentation method] {};
  \node[bigbox1] [fit = (theta) (c)(q)(H)(w),label =  above : Nested PDP with DP] {};
\end{pgfonlayer}
\end{tikzpicture}
\caption{A graphical represention of our clustering model. The above representation also highlights smaller blocks of our clustering model namely GLM/latent variable method, and Nested PDP with DP. Squares represents stochastic model parameters, and dotted squares denote model hyperparameters.};
\label{F:5}
\end{figure}

\subsubsection*{Binary Data}
%Binary data occurs in various context in biomedical setting, such as gene mutation. Gene mutation is an important biological process which can affect synthesis of key protein and in-turn affect cell division and growth. This might lead to cancer. Identifying gene mutation and learning its association with other covariates is crucial for understanding the complex biological process.

%Let's say that for any $j= 1, \cdots, p, j^{th}$ covariate, $\textbf{x}_j$ is a binary vector. Then, for ($ i = 1, \cdots, n$), 
We assume binary datapoint, $x_{ij}$ to have an underlying likelihood, which is given below.
\be\label{binary.def}
h(x_{ij} | c_j^{(0)} = k,\theta^{(0)}_{ik} )= (\Phi( \theta^{(0)}_{ik} ) )^{x_{ij} } (1 - \Phi( \theta^{(0)}_{ik} ) )^{1-x_{ij} },
\ee
where $c_j^{(0)}$ is true cluster allocation, $\theta^{(0)}_{ik}$ denote true latent vector element.

Following \citet{chib}, we generate a latent variable $z_{ij}$, as follows.
\be\label{binary.reg}
g( z_{ij}|(x_{ij}, c_j = k, \alpha_j , \theta_{ik}) ) & = \begin{cases}	
 N( \alpha_j + \theta_{ik}, 1) \text{ truncated at the left by 0},& \text{if } x_{ij} = 1,  \\
 N( \alpha_j + \theta_{ik}, 1) \text{ truncated at the right by 0} ,&\text{if } x_{ij} = 0,
\end{cases}
\ee
where $\alpha_j$ is $j^{th}$ column intercept, $\theta_{ik}$ is estimated latent vector element, $c_j$ is estimated allocation variable.

Equation (\ref{binary.reg}) isn't identifiable. A simple workaround the identifiability problem is to fix $\alpha_j$. Note that we are interested in finding a group of biomarkers, which are similar in the sense of correlation/concordance. Therefore, fixing $\alpha_j$ as column intercept doesn't affect our analysis. A reasonable choice of $\alpha_j$ can be given as below.
\ba
\alpha_j = \frac{\Sigma_{i=1}^n \Phi^{-1}( \frac{x_{ij} }{2 + \epsilon_0 } ) }{n},
\ea
where $\epsilon_0 =0.01$. 

\subsubsection*{Ordinal Data}
%Ordinal data in biomedical setting comes up in form of copy number alterations (cna). The level of loss/gain of alleles denote the levels of cna, which are ordered. Cna affects the composition of a specific gene through amplification or deletion of a dna segment.  Therefore, it's important to learn how different levels of cna interact with other biomarkers.
 
%Let's say that $j^{th}$ covariate $ (j=1, \cdots, p)$ is an ordinal vector, which can take values from $1, \cdots, L$. Then, 
We assume ordinal datapoint $x_{ij}$, to be distributed as follows.
\be\label{ordinal.def}
h(x_{ij} | \gamma^{(0)}_1 \cdots \gamma^{(0)}_L, c^{(0)}_j = k, \alpha_j,\theta_{ik} )=
\Pi_{l=1}^L (\Phi(\gamma_{l+1} - \theta^{(0)}_{ik} ) - \Phi(\gamma_{l}- \theta^{(0)}_{ik} ) )^{I\{x_{ij} = l \} },  
\ee
where $\gamma^{(0)}_l$'s are true cut offs, $c_j^{(0)}$ is true cluster allocation, $\theta^{(0)}_{ik}$ denote true latent vector element.

Following \citet{chib}, we generate a latent variable $z_{ij}$ as follows.
\be\label{ordinal.reg}
g( z_{ij} |(x_{ij},c_j =k,\alpha_j,\theta_{ik}) ) \sim N( \alpha_{j} + \theta_{ik}, 1)  \text{ trunc at the left (right) by } \gamma_{l-1} (\gamma_l) \text{ if } x_{ij} = l,
\ee
where $\alpha_j$ is $j^{th}$ column intercept, $\theta_{ik}$ is estimated latent vector element, $c_j$ is estimated allocation variable,  and q is estimated number of column cluster and  $\gamma_l$'s are estimated cut offs. 

Like in  binary case, equation (\ref{ordinal.reg}) has identifiability problem.  We approximate $\alpha_j$ as $j^{th}$ column center, which is given as follows.
\ba
\alpha_j = \frac{\Sigma_{i=1}^n \Phi^{-1}( \frac{x_{ij} + \epsilon_0 }{L + 2\epsilon_0 } ) }{n},
\ea
where $\epsilon_0 =0.01$. 

We fix $\gamma_0 = -\infty $, $\gamma_L = \infty $ , and $\gamma_1 = -1$  to address identifiabiity concerns. 
The conditional distribution of $\gamma_l$ is uniform in the interval $ [ max( max( z_{ij} : x_{ij} =l ), \gamma_{l-1} ), min( min(z_{ij}: x_{ij} = l +1), \gamma_{l} ) ]$. 

\subsubsection*{Count Data}
%Count data occurs in biomedical datasets as RNA-seq, etc. RNA-seq data are increasingly used to compare gene expression between different conditions, and can be informative about cellular growth, and division. The interaction of RNA-seq data with other biomarkers can help us understand the differences in gene expressions.

%Let's say $j^{th}$ covariate, $\textbf{x}_j$ ($j =1, \cdots, p$) is a count vector. Then, 

We assume count datapoint $x_{ij}$, to be distributed as follows.
\be\label{count.def}
h(x_{ij} | c^{(0)}_j = k, \theta^{(0)}_{ik} ) = exp( x_{ij} log(\theta^{(0)}_{ik}) -  \nu - log( x_{ij}! ) ) .
\ee
where $c_j^{(0)}$ is true cluster allocation, $\theta^{(0)}_{ik}$ denote true latent vector element, and $q^{(0)}$ indicate the number of true column clusters.

Then for log(.) link function, the adjusted dependent variable and it's approximate density is given as below :-
\be\label{count.reg}
z_{ij}| (x_{ij}, c_j=k,\alpha_j)  &=&  \alpha_j + \theta_{ik}+ \frac{ ( x- e^{ \alpha_j +  \theta_{ik}})}{ e^{ \alpha_j + \theta_{ik}}}, \nonumber \\
g(z_{ij}|(x_{ij}, c_j=k,\alpha_j)) &\sim& N( \alpha_j +\theta_{ik}, e^{-( \alpha_j + \theta_{ik} ) }  ),
\ee
where $\alpha_j$ is $j^{th}$ column intercept, $\theta_{ik}$ is latent vector element, $c_j$ is allocation variable. 

Since, $\alpha_j$ isn't identifiable in equation (\ref{count.reg}), we fix $\alpha_j$ as approximate center of  $j^{th}$ column, we fix $\alpha_j$ as below.
\ba
\alpha_j = log(\Sigma \frac{ x_{ij}}{n} + \epsilon ),
\ea
where $j=1, \cdots, p, \epsilon_0 =0.01$. 

\subsubsection*{Continuous Data}
%Continuous data occurs commonly in biomedical setting in the form of gene expression. Gene expression levels directly affect cellular growth and division. It is crucial for biological process, as it syntheses proteins. Furthermore, it is important to discover the interplay between gene expression levels and  how it interacts with other biological measurements, to understand complex biological mechanism. 
We assume continuous data point $x_{ij}$, to be distributed as follows.
\be\label{norm.def}
h(x_{ij} | c_j=k,\alpha_j, \theta^{(0)}_{ik},\tau_0) \sim N (\alpha_j + \theta^{(0)}_{ik}, \tau_0^2),
\ee
where $c_j^{(0)}$ is true cluster allocation, $\theta^{(0)}_{ik}$ denote true latent vector element, $\tau_0$ is true standard deviation, and $\alpha_j$ is $j^{th}$ common intercept.

The latent variable $z_{ij}$ is given by $x_{ij}$ whose approximate density is given as under. 
\be\label{normal.reg}
g(z_{ij}| (x_{ij},c_j =k, \alpha_j, \theta_{ik})) = N( \alpha_j +\theta_{ik}, \tau^2  ),
\ee
where $\alpha_j$ is the $j^{th}$ column intercept, $\theta_{ik}$ is estimated latent vector element, $c_j$ is estimated allocation variable,$\tau$ is estimated variance.

In order to address the identifiability problem in equation (\ref{normal.reg}), we constraint that $\alpha_j + \theta_{ik}$ to be 0.
\subsubsection*{Proportion Data}
%Proportions data occurs frequently in biomedical setting as methylation rate. Methylation entails addition of a methyl group, and altering of the gene functions, which can affect protein synthesis, and cellular growth. It's crucial to understand the association of methylation rate with other biological measurements, to understand the complex biological interplay.

%Let's say the $j^{th}$ covariate vector $\textbf{x}_j$ ($j=1, \cdots, p$) is a proportion vector. Then, 
We assume that  proportion data point $x_{ij}$, to be distributed as follows.
\begin{align}\label{proportion.def}
h(x_{ij} | c^{(0)}_j=k,\alpha_j, \theta^{(0)}_{ik} )=
\frac{ \Gamma( \varphi) } { \Gamma( \frac{e^{\theta^{(0)}_{ik} }}{ 1 + e^{\theta^{(0)}_{ik} } } \varphi ) \Gamma(  \frac{1 }{ 1 + e^{\theta^{(0)}_{ik} } } \varphi ) }x_{ij}^{\frac{e^{\theta^{(0)}_{ik} }}{ 1 + e^{\theta^{(0)}_{ik} } } \varphi} (1- x_{ij} )^{\frac{1 }{ 1 + e^{\theta^{(0)}_{ik} } } \varphi },
\end{align}
where $c_j^{(0)}$ is  true cluster allocation, $\theta^{(0)}_{ik}$ denote true latent vector element, and $\varphi$ is a constant for proportion data.

Under the log odds link function and using Theorem \ref{GLM.beta}, the latent variable ($z_{ij}$) and it's approximate density can be given as follows :-
\be\label{proportion.reg}
z_{ij}|(x_{ij}, c_j = k,\alpha_j , \theta_{ik} ) &=& \alpha_j +\theta_{ik} + \frac{ (y_{ij}^{*} - \mu_{ij}^{*} )}{ \varphi*( \psi^{'}( \mu_{ij} \varphi ) + \psi^{'}( (1-\mu_{ij}) \varphi )  )*(\mu_{ij}*(1-\mu_{ij}) ) },\\
g(z_{ij}|(x_{ij}, c_j=k,\alpha_j, \theta_{ik},\varphi )) &\sim& N( \alpha_j +\theta_{ik},\frac{1}{\varphi^2*( \psi^{'}( \mu_{ij} \varphi ) + \psi^{'}( (1-\mu_{ij}) \varphi )  )*(\mu_{ij}*(1-\mu_{ij}) )^2 }  ), \nonumber
\ee
where $y_{ij}^{\star} = log( \frac{x_{ij} }{1-x_{ij} })$, $\mu_{ij}^{*} = \psi( \mu_{ij} \varphi) - \psi( (1-\mu_{ij})\varphi )$ , $\psi$ denotes digamma function, and $\psi^{'}$ denotes trigamma function, $\alpha_j$ is $j^{th}$ column center, $\varphi$ is estimated dispersion, $c_j$ is estimated cluster allocation, $\theta_{ik}$ is estimated latent vector element, and $q$ indicates estimated number of column clusters.

We specify the following prior on $\varphi$ 
\ba
\pi(\varphi) \sim \Gamma(1,1)
\ea 
where $\Gamma$ refers to Gamma density.  

In order to address identifiability problem in equation (\ref{proportion.reg}), we constraint that $\alpha_j + \theta_{ik}=0$ to make it identifiable.
%Well, to summarize our data augmentation step, we have approximated each matrix column vector as transformed column vector whose every element is normally distributed with some mean and some variance.  %The transformation also yields a jacobian of the transformation cancels out from the conditional likelihood and makes no impact to our calculations.
\section{Posterior Inference}\label{S:post.inf}
Posterior inference is computationally expensive. We apply \citet{guha10}'s data squashing algorithm to speed up computation. We start with an initial configuration of the model parameters. The model parameters are iteratively updated by the MCMC procedure. Broadly, the MCMC procedure can be divided into two steps.
\begin{enumerate}
\item Conditional on the current model parameters, we update the data augmentation variable for each of five data types.
Subsequently, we update the allocation variable, latent vector elements, and hyperparameters until the MCMC chain converges. See Appendix A.2 for further details.
\item  We compute  Monte Carlo estimates to compute the posterior probability of clustering for each pair of covariates. \citet{dahl} proposed a method which uses pairwise probabilities to get the point estimate of binary vector, which is called the \emph{least-square allocation}. At the end of MCMC iteration, we use \citet{dahl} to estimate the \emph{least-square allocation}.
\end{enumerate} 

\section{Clustering Consistency}\label{S:clust.consistency}
Clustering consistency, may seem like a desireable property for clustering procedure, but it is hardly guaranteed. \citet{muller11} notes that the random partition models are exchangeable under permutations and hence, actual clusters and cluster related 
inferences are subject to label switching problem, as in finite mixture models. Clustering under finite mixture models often results in non-identifiablity, and redundancy in clusters (see,  \citet{fruhwirth}). \citet{rousseau} showed that a careful choice of priors yields emptied redundant clusters for over-fitted mixture models. \citet{petralia} defined a repulsive process, which leads to better separated clusters. The general strategy of above solutions is to impose identifiability constraints and detect the true number of clusters. However, above methods fall short of discovering the true allocation of objects into clusters. 
%Another popular approach is to use a nonparamteric Bayesian model for clustering of p objects. This approach has the advantage that one doesn't have to choose the number of clusters. In fact, one can show that asymptotically the number of clusters converges. 

Like finite mixture models, non-parameteric Bayesian models doesn't guarantee cluster consistency, until very recently.
In a surprising result, \citet{guhaveera} proposed a biclustering model for clustering continuous data matrix with n rows and p columns. As n (sample size) and p (covariates) becomes large, they showed that their model can discover true cluster allocation of p covariates. The intuition behind this phenomenon is that as n and p becomes large the n-dimensional objects becomes well separated in $\mathcal{R}^n$ and form identifiable clusters. Theorem \ref{clust.consistency} extends \citet{guhaveera}'s result on clustering consistency to mixed dataset.\\
\textbf{True model.} Let \textbf{$X_{np }$ } be a mixed dataset, where $\textbf{x}_j, j = 1, \cdots, p$ belong to one of the five data types, namely:- binary, continuous, count, ordinal, and proportion. Let $\mathcal{X}_{\delta_{x_{ij}} }$ denote the sample space for $x_{ij}$, where $\delta_{x_{ij}}$ indicate the data type of $x_{ij}, i =1, \cdots, n, j =1, \cdots, p.$  Let  $K_{\delta_{x_{ij}}}(. \mid \theta)$ be the probability density function on the space $\mathcal{X}_{\delta_{x_{ij}}}$ which is defined in Definition \ref{th.def.1}. Further, we make the following assumptions about the covariate generating process.

\begin{enumerate}
\item \label{a} The elements of data matrix, $x_{ij}$, are independent, but not identical realizations, $i = 1, \cdots, n, j =1, \cdots, p.$  from a true mixing distribution $P_0$ convoluted with some exponential family densities, which are given in Definition \ref{th.def.1},
\item \label{b} The true mixing distribution $P_0$ is discrete in $\mathcal{R}$, which implies that $P_0^{(n)}$ the true n-variate mixing distribution is discrete, as well. The discreteness of $P_0^{(n)}$ implies the existence of true cluster allocation variable $ (c^{(0)}_{1}, \cdots, c^{(0)}_{n} )$,
\item \label{c}
\ba
x_{ij} \mid (c^{(0)}_{j} = k)  \sim f_{P_0,ij} = \int K_{\delta_{x_{ij}}}(x_{ij} \mid \theta^{(0)}_{ik} )dP_0,
\ea
where $i = 1, \cdots, n$, $j = 1, \cdots, p, k =1, \cdots, q^{(0)}$, $q^{(0)}$ is the number of true clusters, and  $K_{\delta_{x_{ij}}}$ is given in Definition \ref{th.def.1},
\item \label{d} The atoms of $P_0$ are i.i.d realizations of a univariate normal distribution, $G_0$.
\end{enumerate}

\begin{definition}\label{th.def.1}
\ba
K_{\delta_{x_{ij}}}(x_{ij} \mid \theta) =
\begin{cases}
(\Phi(\theta) )^{x_{ij}} (1 - \Phi(\theta) )^{1-x_{ij}} , &  \delta_{x_{ij}} = 1 \\
\Pi_{m=1}^L (\Phi( \gamma_{m+1} - \theta ) - \Phi( \gamma_{m} - \theta ) )^{ I_{\{x_{ij} = m\}} }  , & \delta_{x_{ij}} = 2 \\
\frac{ \Gamma( \phi) } { \Gamma( \frac{e^{\theta }}{ 1 + e^{\theta } } \phi ) \Gamma(  \frac{1 }{ 1 + e^{\theta } } \phi ) }x_{ij}^{\frac{e^{\theta }}{ 1 + e^{\theta } } \phi} (1- x_{ij} )^{\frac{1 }{ 1 + e^{\theta } } \phi } , & \delta_{x_{ij}} = 3 \\
\phi(  \frac{x_{ij} - \theta }{\tau}) , & \delta_{x_{ij}} = 4 \\
\frac{ \theta^{x_{ij} } }{x_{ij}!} \exp(-\theta ) , & \delta_{x_{ij}} = 5
\end{cases}
\ea
\end{definition}
where, $\Phi$ denotes the cdf of standard normal distribution, $\phi$ denotes the standard normal density, $\gamma_m, m = 1\cdots L$ denote the cut off points for the ordinal data, $( i = 1, \cdots, n, j =1, \cdots, p $).

Let $\mathcal{L} =\{j_1, \cdots j_L \} \subseteq \{1 \cdots p \}$ be a fixed subset of L covariate index. true allocation variable be denoted by $c^{(0)}_{j}$ for j = $1 \cdots p$ and the estimated allocation variable be denoted by $c_{j}$ for j = $1 \cdots p$ , then we compute the allocation accuracy by mean-taxicab distance between $c_j$ and $c^{(0)}_{j}$, which is given below.
\ba
\chi_{\mathcal{L}}(c) = \frac{ \Omega_{j_1 \neq j_2 \in \{1 \cdots p\} } \emph{I}( \emph{I}(c^{(0)}_{j_1} = c^{(0)}_{j_2}) = \emph{I}(c_{j_1} = c_{j_2})  ) }{  {L \choose 2 } }
\ea

A low value of $\chi_{\mathcal{L}}(c)$ (e.g 0) indicates low accuracy and a high value of $\chi_{\mathcal{L}}(c)$ (e.g 1)  indicates high accuracy for estimated allocation vector $\textbf{c}$ for the set $\mathcal{L}$. It is worth noting that $\chi_{\mathcal{L}}(c)$ is invariant of the permutation of the cluster labels.

\begin{theorem}\label{clust.consistency}
Let \textbf{$X_{np }$ } be a mixed dataset, where $\textbf{x}_j, j = 1, \cdots, p$ belong to one of the five data types, namely :- binary, continuous, count, ordinal, and proportion. In addition to the model assumptions (\ref{a})-(\ref{c}), we assume that $f_{P_0,ij}$ in (\ref{c}) is bounded, i = $1, \cdots, n, j=1, \cdots, p$.

Let $\mathcal{L} =\{j_1, \cdots j_L \} \subseteq \{1 \cdots p \}$ be a fixed subset of L covariate index. Then, there exists an increasing sequence $\{ p_n \}$, that grows with n provided $p > p_n$, and the clustering inferences for the covariate subset $\mathcal{L}$ are aposteriori consistent. That is,
\ba
\lim_{{n \to \infty}, {p > p_n} } P[ \chi_{\mathcal{L}}(c) = 1 | X_{np} ] \to 1
\ea
\end{theorem}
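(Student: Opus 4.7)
The plan is to reduce mixed-data consistency to the continuous-data argument of Guha-Veera by exploiting the latent-vector representation of Section \ref{S:model}. For each column $\textbf{x}_j$, the data-augmentation step produces a working variable $\textbf{z}_j$ whose approximate density is normal, centered at $\alpha_j \mathbf{1}_n + \boldsymbol{\theta}_{c_j}$, and all clustering information about $c_j$ is carried by $\textbf{z}_j$ once the nuisance parameters (column intercepts $\alpha_j$, ordinal cut-offs $\gamma_m$, dispersion $\varphi$) are integrated out. Thus if I can show that the posterior on the latent matrix $\{z_{ij}\}$ clusters consistently, the pairwise cluster indicators for the original $\textbf{x}_j$'s inherit that consistency.

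First I would verify prior positivity: the joint prior on $G^{(n)}$ (an $n$-fold product DP mixture) together with the PDP prior on $\textbf{c}$ assigns positive mass to every Kullback-Leibler neighborhood of the true data-generating distribution $\prod_{i,j} f_{P_0,ij}$. The boundedness assumption on $f_{P_0,ij}$ and the discreteness of $P_0$ are precisely what is needed here to obtain a Schwartz-type posterior concentration result. Next I would establish a separation result: for any pair $(j_1,j_2)$ with $c^{(0)}_{j_1} \neq c^{(0)}_{j_2}$, the empirical log-likelihood ratio between the ``same cluster'' and ``different cluster'' hypotheses, as a sum of $n$ independent terms with nonzero mean gap (controlled by a KL divergence bounded away from $0$ between $K_{\delta}(\cdot\mid\theta^{(0)}_{ik_1})$ and $K_{\delta}(\cdot\mid\theta^{(0)}_{ik_2})$), diverges linearly in $n$. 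Conversely, for $(j_1,j_2)$ with $c^{(0)}_{j_1} = c^{(0)}_{j_2}$, the log-ratio concentrates at $0$ with $O(\sqrt{n})$ fluctuations.

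These two ingredients, plugged into the exchangeable Polya-urn expression for $P(c_{j_1}=c_{j_2}\mid X_{np})$ under the PDP, yield pairwise consistency: correct-pair probabilities go to $1$ and incorrect-pair probabilities go to $0$, each at an exponential rate in $n$. A union bound over the $\binom{L}{2}$ pairs in the fixed subset $\mathcal{L}$ then delivers $P[\chi_{\mathcal{L}}(\textbf{c})=1 \mid X_{np}] \to 1$. The threshold sequence $p_n$ arises when controlling the PDP combinatorial factor $\alpha_1 + q^{j-1}d$; $p$ must grow slowly enough that the prior probability of spurious singleton clusters does not overwhelm the exponential likelihood gain, yet fast enough that the effective atom count of $P_0^{(n)}$ is recovered.

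The hard part will be producing data-type-uniform KL lower bounds for distinct cluster atoms. For binary, continuous, and Poisson columns the bound follows from a standard exponential-family calculation. For ordinal data the cut-offs $\gamma_m$ must themselves be identified (anchored by the fixed $\gamma_1 = -1$), which requires an auxiliary posterior-consistency argument for the cut-off vector before the KL bound can be quoted. For proportion data, Theorem \ref{GLM.beta} and Fact \ref{wkng.grad} must be invoked to show that the working-value variance $[\varphi^2(\psi'(\mu\varphi)+\psi'((1-\mu)\varphi))(\mu(1-\mu))^2]^{-1}$ stays bounded on high-probability compact sets of $\theta^{(0)}_{ik}$, after which the Gaussian-approximation arguments of Guha-Veera apply verbatim. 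Stitching these five type-specific bounds into a single uniform separation inequality is the delicate step; once that is done, the almost-sure limit argument of Guha-Veera transfers without essential modification.
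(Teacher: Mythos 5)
Your overall architecture --- pairwise separation of marginal likelihoods, prior positivity, a union bound over the $\binom{L}{2}$ pairs in the fixed set $\mathcal{L}$, and a threshold $p_n$ coming from the PDP combinatorics --- is the same skeleton as the Guha--Veera argument the paper extends, and the split direction (truly distinct columns get separated because per-row KL gaps between $K_{\delta}(\cdot\mid\theta^{(0)}_{ik_1})$ and $K_{\delta}(\cdot\mid\theta^{(0)}_{ik_2})$ sum to something linear in $n$) is correct in outline. The merge direction, however, is not established by what you state. For a pair with $c^{(0)}_{j_1}=c^{(0)}_{j_2}$ you claim the log-likelihood ratio between ``same cluster'' and ``different clusters'' concentrates at $0$ with $O(\sqrt n)$ fluctuations and that this, fed into the Polya-urn weights, gives co-clustering probability tending to $1$. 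It does not: a log Bayes factor that is $O_p(\sqrt n)$ and centered at $0$ is negative with probability bounded away from zero, and since the PDP prior places non-vanishing mass on splitting the pair, the posterior would favor the split a non-negligible fraction of the time. What is needed is that the \emph{integrated} likelihood of the split allocation pays an Occam penalty: the spurious extra $n$-dimensional atom must be integrated against $G^{(n)}$, and one must show that this integral falls below the merged marginal by a factor that diverges with $n$ and dominates both the $O_p(\sqrt n)$ fluctuation and the prior odds. That predictive-density comparison is the heart of the consistency proof and is absent from your plan.

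A second gap is the reduction to the latent matrix $\{z_{ij}\}$. The augmentation is exact only for binary and ordinal columns; for count and proportion columns the densities in (\ref{count.reg}) and (\ref{proportion.reg}) are one-step IRLS normal approximations whose per-observation error does not vanish with $n$, while the theorem is stated for data generated from the exact kernels of Definition \ref{th.def.1} with boundedness of $f_{P_0,ij}$ as the operative hypothesis. You must either run the separation argument directly on the exact kernels $K_{\delta}$ --- using the boundedness of $f_{P_0,ij}$ to dominate the marginal-likelihood ratios, which is what that assumption is for, rather than for Schwartz-type KL support --- or prove a uniform-in-$n$ control of the accumulated approximation error; ``apply the Gaussian arguments verbatim'' does neither. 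A smaller point: the theorem requires only $p>p_n$, a lower bound ensuring each true column cluster in $\mathcal{L}$ is populated by enough of the $p$ columns for its atoms to be identified, so your clause that $p$ ``must grow slowly enough'' is not the right reading of the condition.
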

\begin{proof}
See Appendix A.4 for the proof.
\end{proof}

\section{Simulation Studies}\label{S:simulation}
To evaluate the performance of \emph{Gen-VariScan} procedure to detect covariate column clusters, we study artificially
simulated datasets under different scenarios. The parameters of true model were chosen to closely match the estimates
of the benchmark data analysis. We also perform a comparison against other methods, when true column clusters are known.

\subsection{Simulation scheme}\label{setup} We investigate the proposed method's accuracy as a clustering procedure for mixed datasets using artificial dataset for which true clustering pattern is known. The artificial data was selected to have the same dimension as benchmark dataset (n =71 subjects and p =352 covariates), and then we compared the co-clustering probabilities of p covariates against truth. The simulation analysis was done for different mix of mixed dataset, including one which closely resembles the mix of benchmark data. Depending on the mix of the data, the artificial data was simulated as below.
\begin{enumerate}
\item \emph{ True Allocation variables:}  We generate $c_1^{(0)}, \cdots, c_p^{(0)}$ as partitions through Poisson Dirichlet Process (PDP) with a discount parameter d = 0.3 and mass parameter $M_1=20$. Thereby, we compute the true number of clusters $Q_0$.
\item \emph{ Latent vector elements :} For i = 1, $\cdots$, n , k = 1, $\cdots, Q_0$ , elements $\theta_{ik}^{(0)} \sim G$, where $G \sim DP(M_2 G_0)$ with mass parameter $M_2$ = 11 and base distribution $G_0= N(\mu_2, \tau_2^2)$ . We choose $\mu_2 =-0.18$, and $\tau_2^2 = 2.2$.
\item \emph{Covariates :} For each covariate column, we assign a data type out of five data types (binary, ordinal, count, continuous, and proportion) using multinomial distribution with a pre-specified probability according to the mix of dataset. 
Subsequently, we simulate the covariate column. For each data type, the covariate column is generated using a density function, which is a function of two components  i) $\alpha_j$  ii) $\theta_{ik}$, where $c_j=k$. Specifically, we generate the $i^{th}$ element of $j^{th}$ covariate column as below ($i = 1, \cdots, n, j=1, \cdots, p, k =1, \cdots, q, l =1, \cdots, 5$).
\begin{itemize}
\item Binary : $ x_{ij}| c_j = k \sim Bernoulli(1, \Phi( \theta_{ik} ) ) .$
\item Continuous : $ x_{ij}| c_j = k \sim N(\alpha_j + \theta_{ik}, \tau^2 ), $
where $\alpha_j =6.27$,  $\tau = 2.14$ .
\item Count : $ x_{ij}| c_j =k \sim Poisson(  exp(\theta_{ik}) )$.
\item Ordinal : The ordinal data points were assumed to have five categories. \\
$
x_{ij} |c_j =k \sim Multinom(1,(p_{ij(1)},p_{ij(2)},p_{ij(3)},p_{ij(4)},p_{ij(5)}), 
$
where $p_{ij(l)} |(c_j = k)=\Phi(\gamma[l+1]) - \theta_{ik}) - \Phi(\gamma[l]) - \theta_{ik}), \gamma = \{ -Inf, -2,-1,0,1, Inf \}.$
\item Proportion : 
$
x_{ij} | (c_j = k) \sim Beta( \mu_{ij}\varphi, (1-\mu_{ij} )\varphi ), 
$

where 
$\mu_{ij}  | (c_j = k) = \frac{ \exp(\theta_{ik})}{ 1 + \exp(\theta_{ik} ) },\varphi = 19.43$.
\end{itemize}
\end{enumerate}

\subsection{Comparison}
The artificial data was simulated as per the settings in section \ref{setup} for seven different scenarios. The scenarios depend on the mix of data type in the dataset. Five scenarios corresponds to five data types, benchmark data mix scenarios resembles mix of data type in the benchmark data, and uniformly mix of data type across five data types. Further, the comparison study was replicated 15 times. For each dataset, we ran the simulation for 15,000 iterations, where we ignored the first 5,000 iterations as burn-in. Subsequently, using \citet{dahl}, we estimated a point estimate for cluster allocations, called the \emph{least-squared configuration}, and denoted by $\hat{c}_1, \cdots, \hat{c}_p$. Finally, we estimated the accuracy of our method by estimating the proportion of correctly predicted clustered covariate pairs, $\hat{\chi}$, which is given below.
\ba
\hat{\chi} = \frac{ \Omega_{j_1 \neq j_2 \in \{1 \cdots p\} } \emph{I}( \emph{I}(c^{0}_{j_1} = c^{0}_{j_2}) = \emph{I}(c_{j_1} = c_{j_2})  ) }{  {p \choose 2 } }.
\ea

\begin{figure}[t]
\centering
\includegraphics[width=14cm,height=10cm]{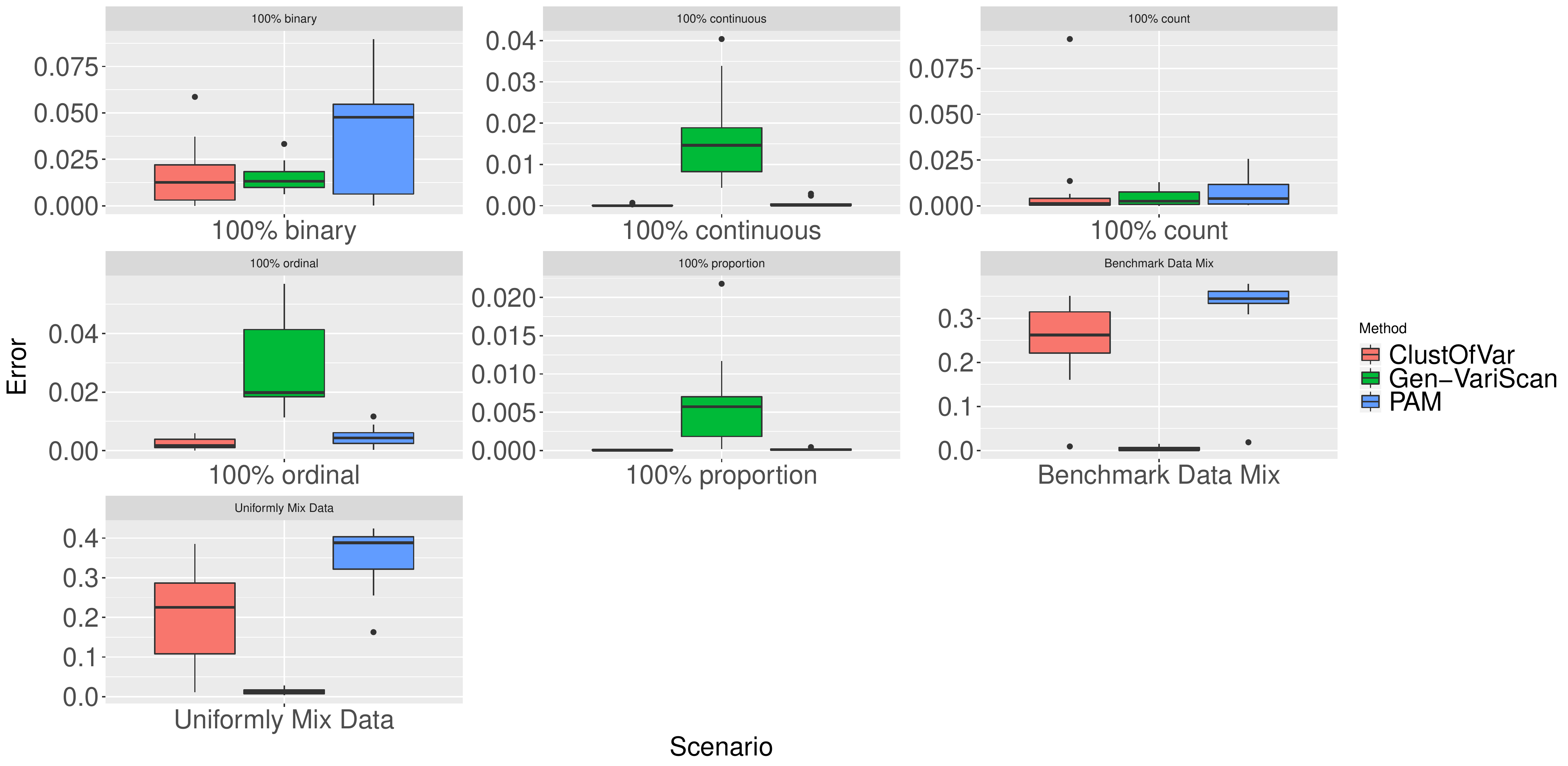}
\caption{The figure compares misclassification error ( 1- $\chi$) of our method (in green) against PAM (in blue) and ClustOfVar (in red). The comparison is made for seven scenarios (100 \% binary, 100\% continuous, 100\% count, 100 \% ordinal, 100 \% proportion, Benchmark Data Mix, Uniformly Mix Data, respectively. }
\label{F:comp}
\end{figure}

A high value of $\hat{\chi}$ (e.g 1) or a low value of $1 -\hat{\chi}$ (e.g 0) indicates high clustering accuracy.
We evaluate our method against Partition around mediods (PAM) with gower's distance \citep{PAM}, and ClustOfVar's hierarchical clustering approach \citep{ClustOfVar}. The number of cluster for PAM and ClustOfVar was estimated using maximum silhoutte width.
For each of the dataset, we ran the MCMC for 15,000 iterations where the first 5,000 iteration was discarded as burn in. Using \citet{dahl} we computed a point estimate for cluster allocations. Subsequently, we evaluated the three method by computing the proportion of incorrectly predicted clustered covariate pairs, $1- \hat{\chi}$.

Figure \ref{F:comp} gives the boxplot of error ( $1 - \hat{\chi}$) for each of the three methods for seven different scenarios. The figure further shows that PAM and ClustOfVar fairs marginally better than \emph{Gen-Variscan} when the covariates consists of a single data type. The accuracy of PAM and ClustOfVar could be attributed to the phenomenon that objects in $\mathcal{R}^n$ tends to become well separated for large n (see Section \ref{S:clust.consistency}). PAM and ClustOfVar can detect these well separated objects when datasets consists of single data type.
However, the figure shows that PAM and ClustOfVar are highly inaccurate for mixed data scenarios whereas \emph{Gen-Variscan} maintains a consistency in accuracy across different mix of data types. This shows that \emph{Gen-Variscan} is well-suited to cluster mixed datsets, in comparison to other two methods.
The boxplot (see Figure \ref{F:comp}) shows that \emph{Gen-VariScan}'s median error ($1 - \hat{\chi}$) of incorrectly classifiying covariate pairs is atmost about 0.02, across all scenarios. This suggests that \emph{Gen-VariScan} correctly classifies about 60,886 covariate pairs out of $\binom{352}{2}$ and incorrectly classifies about 1,242 covariate pair out of total $\binom{352}{2}$.
It is worth pointing that \emph{Gen-Variscan}'s accuracy is consistently across the scenarios. 

Table \ref{T:1} gives the 95 \% credible interval for the lower bound of the Bayes Factor of a PDP model and a DP model.
The lower bound of the Bayes Factor is given by $log( P( d !\neq 0 \mid \textbf{X} )/ P(d = 0 \mid \textbf{X} ) )$. 
Besides 100 \% count and 100 \% binary, we find overwhelming evidence in favor for the PDP model against  DP model.
This is true even though we had a put a prior $ \frac{1}{2} I\{d=0\} + \frac{1}{2}U(0,1)$.
For 100 \% binary scenario, there's a strong evidence in favor of the PDP model, but it's inconclusive in case of 100 \% count scenario.
The table also gives the 95 \% credible interval for the estimated d. We note that the true value of d, i.e 0.3, lie within the 95 \% credible interval for all the scenario. This validates that our model can estimate true level of sparsity among column clusters.

\begin{table}[]
\centering
\caption{ 95 \% Credible Interval}
\begin{tabular}{ | l | l | l | l | l | }
\hline
 & \multicolumn{2}{l |}{Estimated d} & \multicolumn{2}{l |}{ log Bayes Factor}  \\
\hline 
Scenario &  Lower C.I     & Upper C.I     &  Lower C.I      & Upper C.I         \\
\hline
Benchmark Data Mix & 0.200         & 0.401          & Inf          & Inf          \\
Uniformly Mix Data & 0.262         & 0.484          & Inf           & Inf          \\
100\% binary & 0.078           & 0.399         & 0.974          & Inf          \\
100\% continuous & 0.229        & 0.547        & Inf          & Inf         \\
100\% count & 0.00        & 0.32        & -1.77          & Inf         \\
100\% ordinal & 0.274          & 0.386       &  Inf         & Inf         \\
100\% proportion & 0.183          & 0.462         & 6.87          & Inf \\ \hline
\end{tabular}
\label{T:1}
\end{table}

We also evaluate the model for different value of $\tau =\{0.1,0.5,0.9\}$ (standard deviation for continuous data), $\varphi =\{10, 20, 30\}$ (dispersion parameter proportion data) for the benchmark data mix scenario. Table \ref{T:2} gives the 95 \% credible interval for d for benchmark data mix. We see that the true value of d, i.e 0.3,  lies within the credible interval. This reflects that our model can achieve similar level of sparsity as the true model, which further validates our method. 
\begin{table}[]
\centering
\caption{ 95 \% Credible Interval of d}
\begin{tabular}{| l | l | l | l |}
\hline
 \multicolumn{2}{| l |}{ Scenario} & \multicolumn{2}{l |}{95 \% C. l of d} \\
\hline
 $\tau$ &  $\varphi$ &  Lower &  Upper \\
\hline
 0.1& 10  &   0.201        &  0.345          \\
 0.1& 20  &   0.174        &  0.330          \\
 0.1& 30  &   0.175        &  0.332          \\
 0.5 & 10  &  0.292        &  0.439          \\
 0.5 & 20  &  0.148        &  0.311          \\
 0.5 & 30  &  0.284        & 0.436          \\
 0.9 & 10  &  0.160        & 0.317          \\
 0.9 & 20  &  0.298        & 0.450          \\
 0.9 & 30  &  0.241        & 0.374         \\ \hline
\end{tabular}
\label{T:2}
\end{table}

\section{Benchmark Data Analysis}\label{S:data}
\subsection{Data Description}
We downloaded TCGA Glioblastoma Multiforme (GBM) dataset through TCGA2STAT package in R \citep{tcga2stat}, which includes gene expression, copy number alteration, methylation, and  mutation datasets. Each of these datasets is extracted from a different platform. The gene expression dataset comes from Affymetrix Human Genome U133A 2.0 Array, the methylation dataset comes from  Illumina Infinium HumanMethylation27, the copy number alteration comes from Affymetrix SNP6, and the mutation data was obtained from Mutation Annotation Format (MAF). The gene expression and the mutation dataset is at the gene level (level 3), whereas the copy number alteration and methylation dataset was obtained at probe level (level 2) and then mapped to the gene level (level 3).

Each dataset has 10 normal samples. For gene expression and copy number alteration biomarkers, we used Wilcoxon rank sum test to determine whether or not biomarkers were different between the tumor case and the normal sample. A p-value threshold ($p < 10^-7$) was applied to select the biomarkers. We filtered the mutation biomarkers, which had less than 10 number of mutation occurences. Further, we included gene expression, copy number alteration, methylation, and mutation of genes which comprises the RB pathway, RTK/Ras/PI3K/AKT pathway, and TP53 pathway. These pathways are associated with Glioblastoma cancer \citep{TCGA08}. We merged the datasets by patient and kept the patient records which were present in all the four data types. The combined dataset consists of 71 patients and 352 covariates.

\subsection{Results}
\begin{figure}[h]
\centering
\includegraphics[width=12cm, height = 6cm]{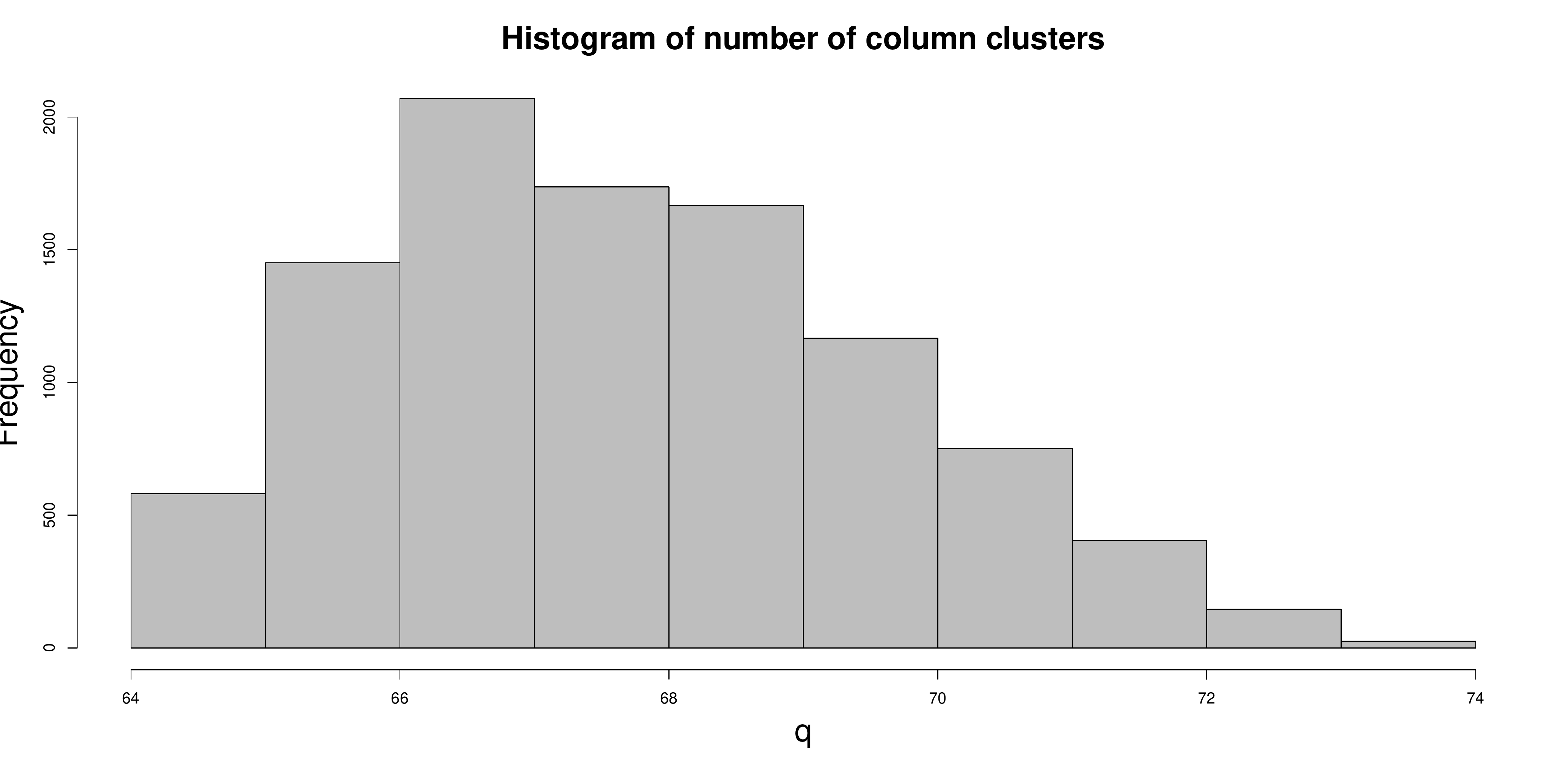} \\
\includegraphics[width=12cm, height =6cm]{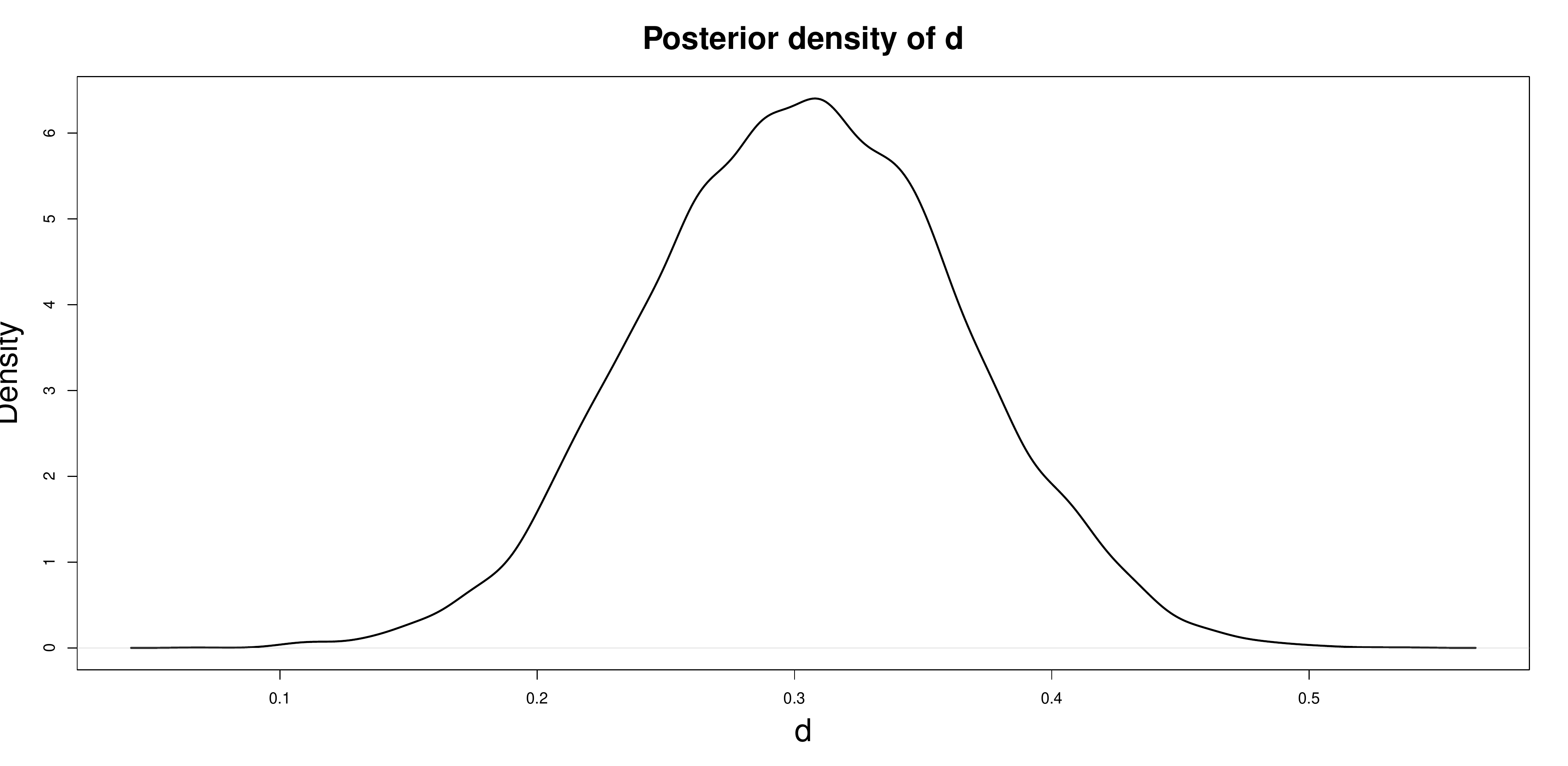}
\caption{The figure on top gives histogram of q and the figure on bottom gives posterior density of d.}
\label{F:postsum}
\end{figure}

We analyzed the merged dataset (GBM data) using our method. Figure \ref{F:postsum} gives the posterior summary of q (number of clusters) and d (discount parameter). The estimated number of covariate cluster turned out to be $\hat{q}$ = 70. 

\begin{figure}[h]
\includegraphics[width=12cm,height=8cm]{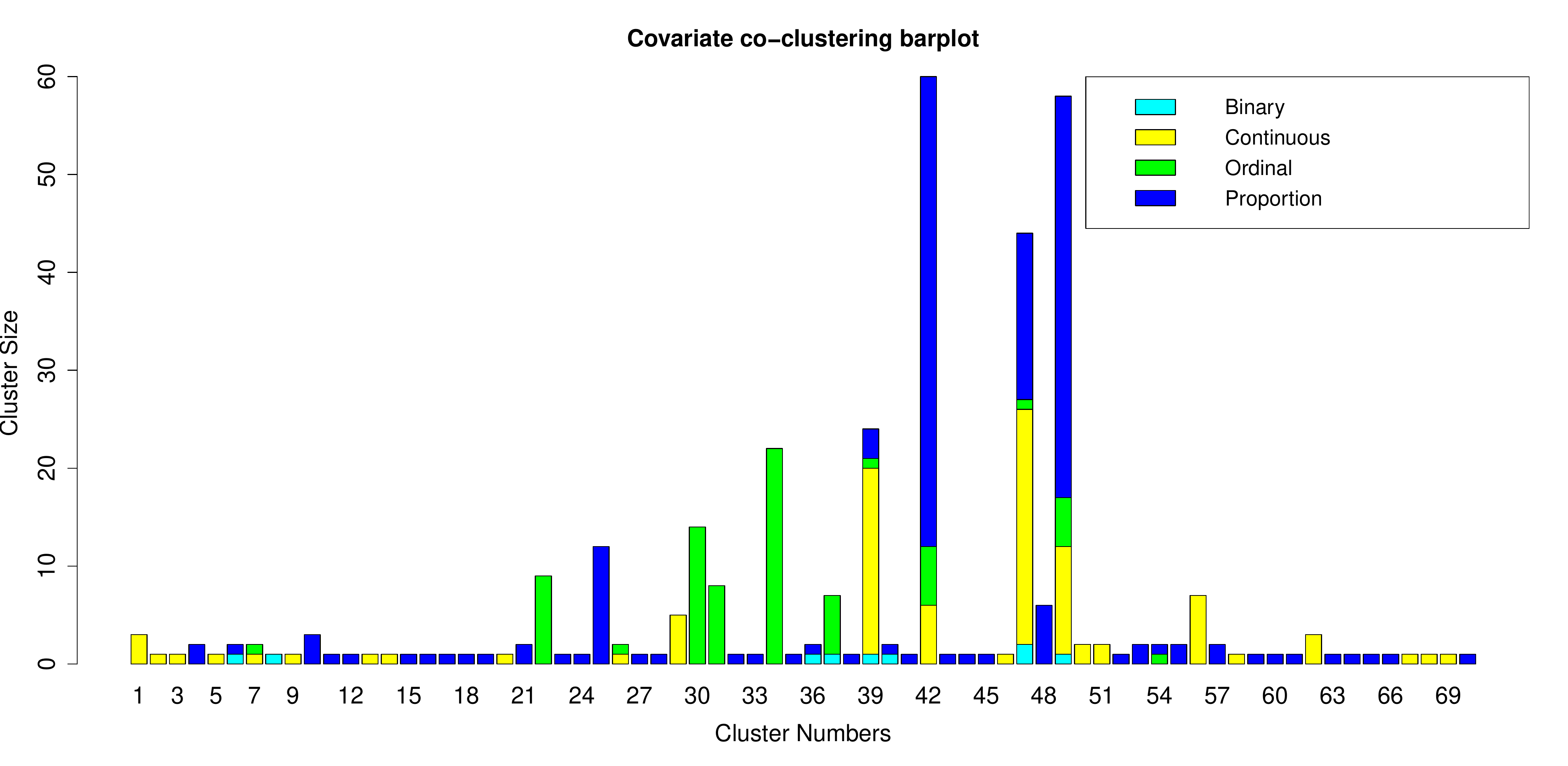}
\caption{Barplot gives cluster membership of covariates across data types.}
\label{F:bargraph}
\end{figure}

\begin{figure}[h]
\centering
\includegraphics[width=12cm, height = 6cm]{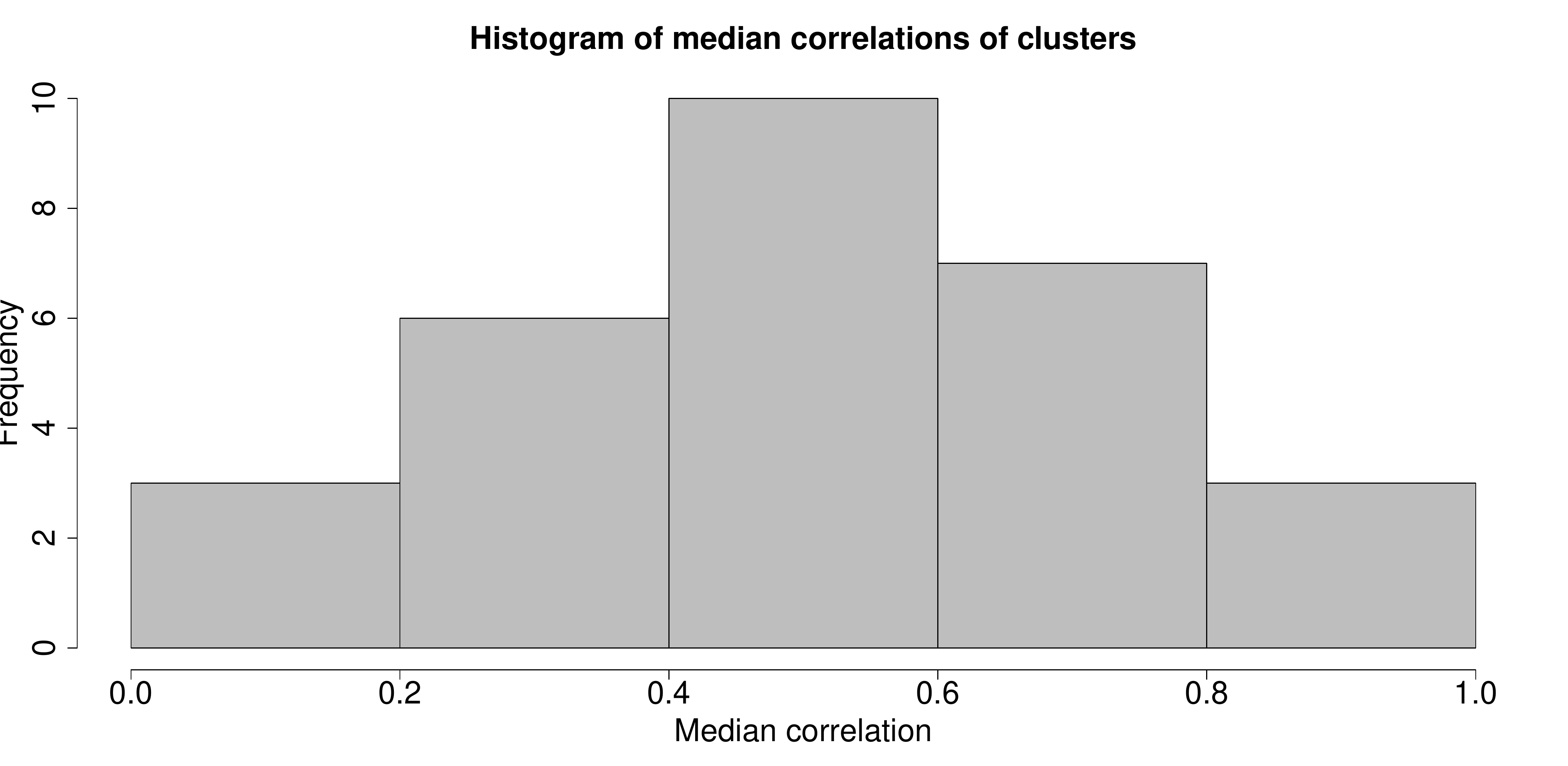} \\
\includegraphics[width =12cm, height =6cm]{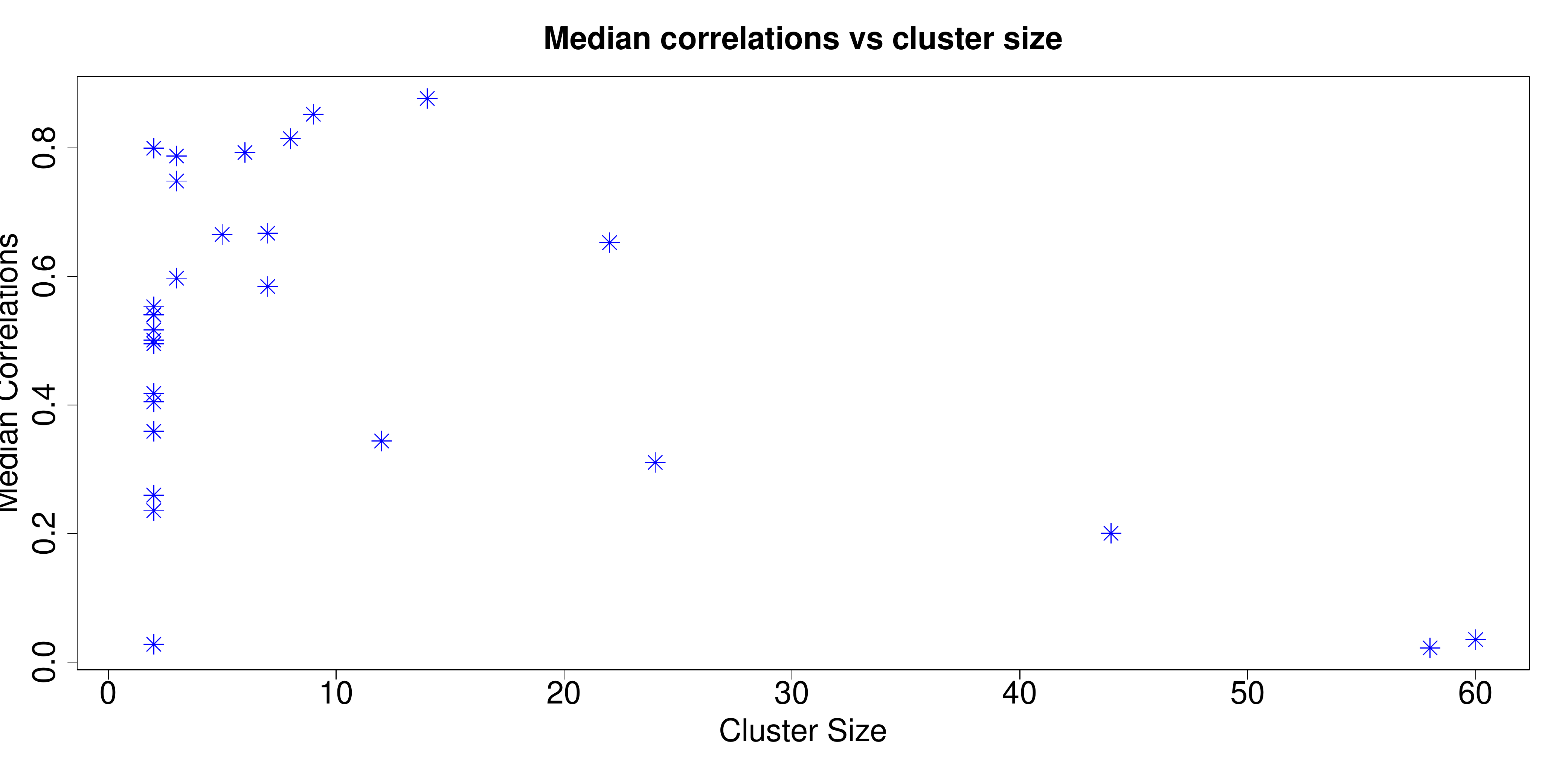}
\caption{ The figure on top gives  histogram of median correlation, and the figure on bottom gives plots median correlation against cluster size.}
\label{F:medclust}
\end{figure}

Figure \ref{F:bargraph} shows that majority of the covariates are allocated into a fewer clusters of moderate/large size, and a large number of smaller clusters. This is similar to sparsity structure imposed by clustering under PDP process.
The bargraph also gives a visual insight into shared cluster membership of covariates across data types. The bargraph color codes the proportion covariate as blue, ordinal covariates as green, continuous covariates as yellow, and binary covariates as cyan. It is worth noting that the common denominator in majority of the shared clusters (across data types) is continuous data type, which corresponds to the gene expression data. The existing knowledge of biological mechanism confirms this finding, and moreover it shows that the above covariate partition captures useful biological information \citep{savage}.
Finally, we see there are non-singleton clusters which are of the same data type. This phenomenon is confirmed by the existing knowledge of system biology, see \citep{ibag}.

%Figure \ref{F:postsimmat}, which gives the estimated correlation between covariates of the transformed covariates further confirms the phenomenon. We see that continuous covariates forms a major part of the correlated diagonal blocks.

The effectiveness of our model could be further demonstrated as follows. For each of the estimated non-singleton covariate clusters, we compute the correlation between its member covariates in the original data. The median correlation gives a summary of associations in these clusters. The median correlations of estimated covariate clusters are plotted in Figure \ref{F:medclust}. We note that the median correlations are all positive. This strongly suggests that our method has found groups of biomarkers which share a similar pattern in the original dataset. We also plot the median correlation against the cluster size in Figure \ref{F:medclust}, which confirms that the median correlation of estimated clusters is moderate despite the cluster size.

Furthermore, we looked at the transformed covariates which were allocated into covariate clusters with more than 10 covariates (biomarkers). We compared the heatmap of such transformed covariates with the heatmap of corresponding original covariates in Figure \ref{F:heatmap}. The gain from our method is apparent, as the heatmap of transformed covariates is comparatively even, whereas the heatmap of the original covariates is  stark, in comparison. We also partition the heatmap of the transformed covariate into different clusters on the basis of estimated cluster membership, using solid black lines, see Figure \ref{F:heatmap}. We note that the heatmap of the transformed covariates within a cluster is quite similar.

For biological relevance of our findings, we looked up biomarkers on the same gene which share the estimated cluster. This could 
explain the role of biological association on same gene (across data type) play in tumor growth. Subsequently, we verified our findings about these genes and their interactions by cross-referencing them in cancer literature. We found that genes associated with GBM cancer were mutated, amplified or methylated to have a positive associated with gene expression. For instance, copy number alteration of EGFR \citep{egfr2}, MDM4 \citep{mdm4}, mutation of TP53 gene \citep{p53}, and methylation of CKDN2B gene \citep{cdkn2b} and PTEN gene \citep{pten} at certain sites
were associated with corresponding gene expression on respective genes. We corroborated these findings by looking up in the cancer literature. Our findings gives us an insight into the underlying biology and how
these associations play a part in tumor growth.
A detailed analysis of the mechanistic interpretation of these genes is given in Appendix A.5.

\begin{figure}[h]
\centering
\includegraphics[width=12 cm, height=6cm]{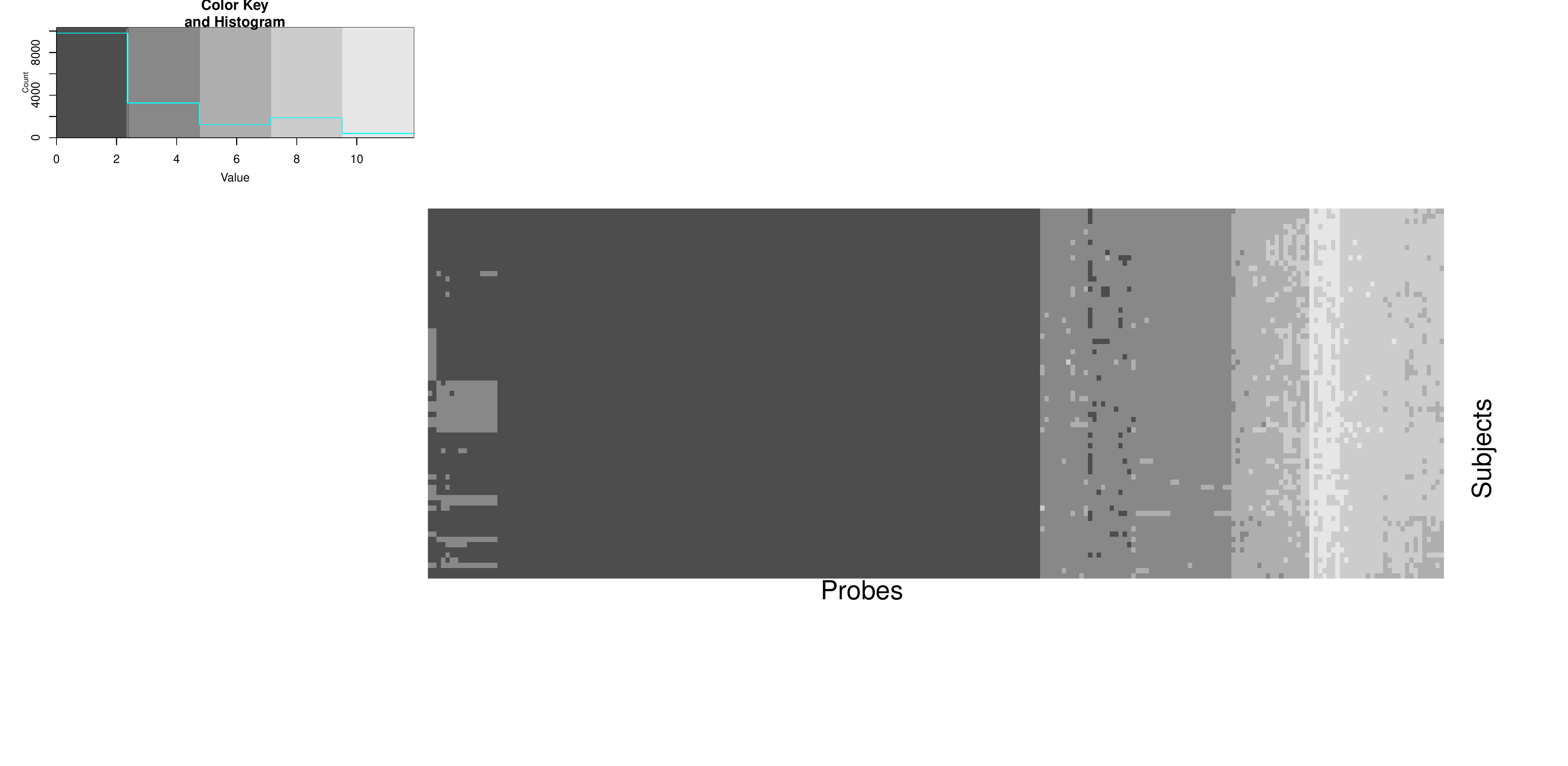}
\includegraphics[width=12cm, height =6cm]{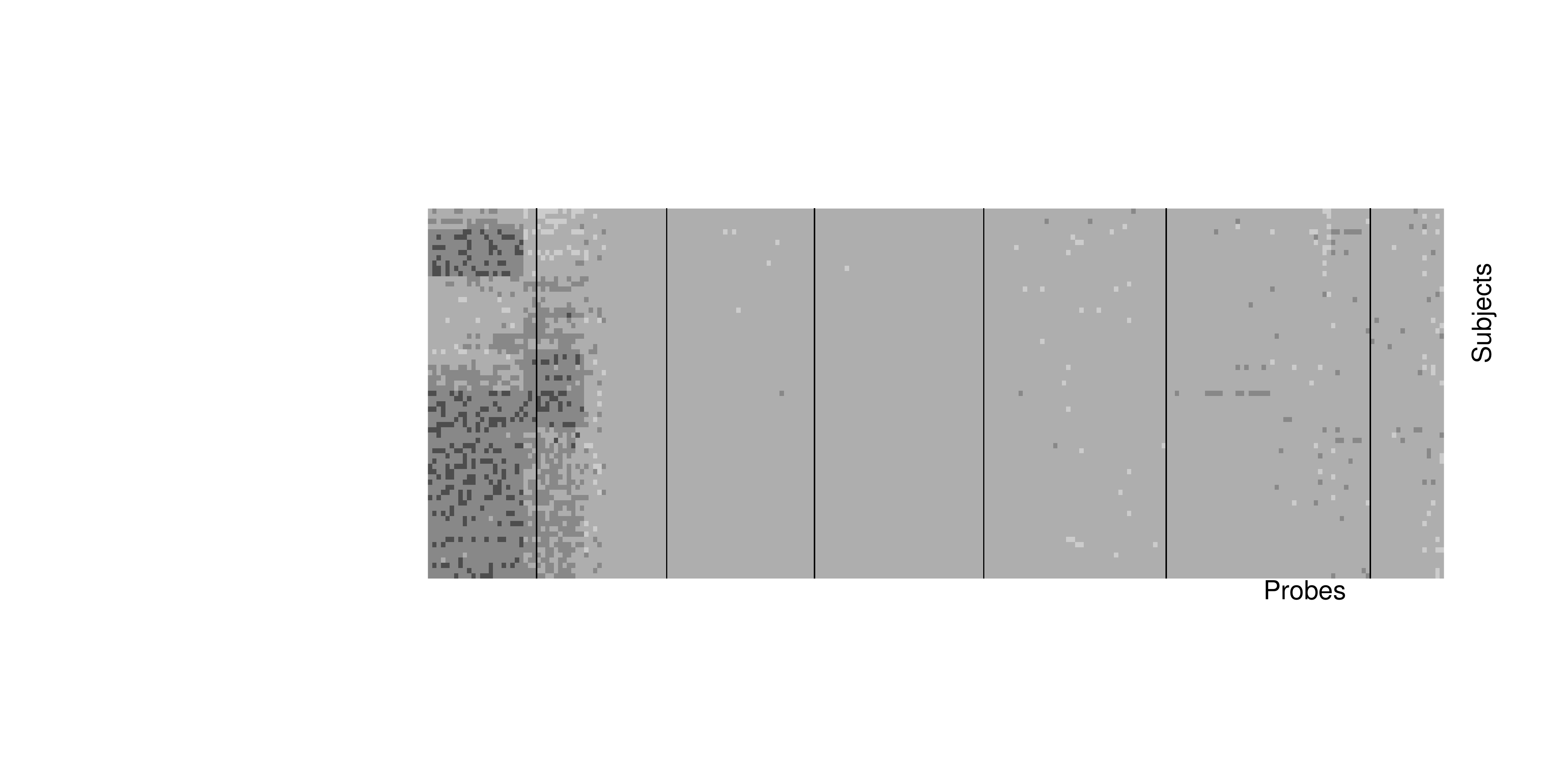}
\caption{ Heatmap of GBM covariates that were assigned to column clusters with more than 10 members. The top panel corresponds to original covariates and the bottom panel corresponds to transformed covariates. The vertical lines in the bottom panel groups covariate clusters. The color key is provided at the top of the panel.}
\label{F:heatmap}
\end{figure}

\section{Conclusion/Discussion}\label{S:conclusion}
Using data augmentation approach, we extended \citet{guhaveera}'s approach to mixed datasets. \emph{Gen-VariScan} is a flexible technique for clustering of high-dimensional mixed datasets. It groups mixed covariates
into small number of clusters, which consists of similar covariates.
We provide theoretical justification for the data augmentation approach, and also prove that our method can detect true co-clustering of covariates. We also demonstrate the effectiveness of our model through simulation and real data analysis. The proposed method outperforms existing approaches for the mixed dataset. In real data analysis, we identified several biological association and interaction, which has known implications in development and progression of cancer. As a byproduct of our work, we also propose a working value approach for beta regression for constant dispersion.

Potentially, one could use our approach to perform survival regression, or find subtypes among patients. It would be worth investigating the gains from such an approach. Another area of investigation is to study the convergence rates of the co-clustering
of covariates. The convergence rate can help us foresee the performance of our method on a large dataset. Finally, our beta regression approach can be used in other settings.

\section*{Supplementary Material}
Supplementary material includes additional details on workings of \emph{Gen-VariScan}, biological interpretations of Benchmark Data Analysis, and proofs of theorems.

%\begin{figure}[H]
%\includegraphics[width=16cm,height=16cm]{}
%\caption{The figure gives the lower and upper bound of 95 \% credible interval of discount parameter d. The bounds are obtained for different values of $\tau= \{0.1, 0.5,0.9 \},\varphi =\{10, 20, 30 \}$ in benchmark data mix. Cyan indicates the upper bound for the credible interval and red indicates lower bound for the credible interval.}
%\label{F:ci-d}
%\end{figure}

%\begin{figure}[H]
%\includegraphics[width=12cm,height=8cm]{hist_raw.pdf}
%\caption{Histogram of merged GBM dataset}
%\label{F:7}
%\end{figure}

%\begin{figure}[H]
%\centering
%\includegraphics[width=15 cm, height = 12 cm]{}
%\caption{Correlation matrix for the 352 covariates. This matrix gives the correlation between the pair wise covariates after the integrated bidirectional clustering. The side colors on top and left indicate the covariate data type.}
%\label{F:postsimmat}
%\end{figure}

\bibliography{reference}

\begin{thebibliography}{55}
\providecommand{\natexlab}[1]{#1}
\providecommand{\url}[1]{\texttt{#1}}
\expandafter\ifx\csname urlstyle\endcsname\relax
  \providecommand{\doi}[1]{doi: #1}\else
  \providecommand{\doi}{doi: \begingroup \urlstyle{rm}\Url}\fi

\bibitem[Abidin and Westhead(2017)]{abidin17}
Faitin N.~Zainul Abidin and David~R. Westhead.
\newblock Flexible model-based clustering of mixed binary and continuous data :
  application to genetic regulation and cancer.
\newblock \emph{Nucleic Acids Research}, 45(7):\penalty0 e53, 2017.

\bibitem[Albert and Chib(1993)]{chib}
James~H. Albert and Siddhartha Chib.
\newblock Bayesian analysis of binary and polychotomous response data.
\newblock \emph{Journal of the American Statistical Association}, pages
  669--679, 1993.

\bibitem[Baeza(2003)]{pten}
N~et~al. Baeza.
\newblock Pten methylation and expression in glioblastomas.
\newblock \emph{Acta Neuropatholica}, 106(5):\penalty0 479--85, 2003.

\bibitem[Barry and Hartigan(1992)]{hartigan}
Daniel Barry and J.~A. Hartigan.
\newblock Product partition models for change point problems.
\newblock \emph{Annals of Statistics}, 20(1):\penalty0 260--279, 1992.

\bibitem[Branscum et~al.(2007)Branscum, Johnson, and Thurmound]{branscum}
Adam~J. Branscum, Wesley~O. Johnson, and Mark~C. Thurmound.
\newblock Bayesian beta regression : Applications to household expenditure data
  and genetic distance between foot-and-mouth disease viruses.
\newblock \emph{Austrailian and New Zealand Journal of Statistics},
  49(3):\penalty0 287--301, 2007.

\bibitem[Cepeda-Cuervo et~al.(2016)Cepeda-Cuervo, Daniel, Margarita, and
  Javier]{bayesbeta}
Edilberto Cepeda-Cuervo, Jaimes Daniel, Marin Margarita, and Rojas Javier.
\newblock Bayesian beta regression with bayesianbetareg r-package.
\newblock \emph{Computational Statistics}, 31:\penalty0 165--187, 2016.

\bibitem[Chavent et~al.(2012)Chavent, Kuentz-Simonet, Liquet, and
  Saracco]{ClustOfVar}
M.~Chavent, V.~Kuentz-Simonet, B.~Liquet, and J.~Saracco.
\newblock Clustofvar : An r package for the clustering of variables.
\newblock \emph{Journal of Statistical Software}, 50(13):\penalty0 1--16, 2012.

\bibitem[Dahl(2006)]{dahl}
D.~B. Dahl.
\newblock \emph{Model-Based Clustering for Expression Data via a Dirichlet
  Process Mixture Model}.
\newblock Cambridge University Press, 2006.

\bibitem[Dey et~al.(2000)Dey, Ghosh, and Mallick]{mallick}
Dipak~K. Dey, Sujit~K. Ghosh, and Bani~K. Mallick.
\newblock \emph{Generalized Linear Models: A Bayesian Perspective}.
\newblock CRC Press, 2000.

\bibitem[DeYoreo and Kottas(2017)]{deyoreo}
Maria DeYoreo and Anthanasios Kottas.
\newblock Bayesian nonparametric modeling for multivariate ordinal regression.
\newblock \emph{Journal of Computational and Graphical Statistics}, pages
  1--14, 2017.

\bibitem[Eisen et~al.(1998)Eisen, Spellman, Brown, and Botstein]{eisen98}
Michael~B. Eisen, Paul~T. Spellman, Patrick~O. Brown, and David Botstein.
\newblock Cluster analysis and display of genome-wide expression patterns.
\newblock \emph{Proceedings of National Academia of Science, USA},
  95(25):\penalty0 14863--14868, December 1998.

\bibitem[Ferrari and Cribari-Neto(2004)]{ferrari}
Silvia Ferrari and Fracisco Cribari-Neto.
\newblock Beta regression for modelling rates and proportions.
\newblock \emph{Journal of Applied Statistics}, 31(7):\penalty0 799--815, 2004.

\bibitem[Figueroa-ZunIga et~al.(2013)Figueroa-ZunIga, Arellano-Valle, and
  Ferrari]{mixedbeta}
Jorge~I. Figueroa-ZunIga, Reinaldo~B. Arellano-Valle, and Silvia L.~P. Ferrari.
\newblock Mixed beta regression : A bayesian perspective.
\newblock \emph{Computational Statistics and Data Analysis}, 61:\penalty0
  137--147, 2013.

\bibitem[Fruhwirth-Schnatter(2006)]{fruhwirth}
Sylvia Fruhwirth-Schnatter.
\newblock \emph{Finite Mixture and Markov Switching Models}.
\newblock Springer, 2006.

\bibitem[Furgason et~al.(2015)Furgason, RF, SK, FH, S, AE, JS, and el~M]{mdm4}
JM~Furgason, Koncar RF, Michelhaugh SK, Sarkar FH, Mittal S, Sloan AE,
  Barnholtz-Sloan JS, and Bahassi el~M.
\newblock Whole genome sequence analysis links chromothripsis to egfr, mdm2,
  mdm4, and cdk4 amplification in glioblastoma.
\newblock \emph{Oncoscience}, 2(7):\penalty0 618--28, 2015.

\bibitem[Gnedin and Pitman(2005)]{gnedin05}
A.~Gnedin and J.~Pitman.
\newblock Exchangeable gibbs partitions and stirling triangles.
\newblock \emph{Zap. Nauchn. Sem. S.-Peterburg. Otdel. Mat. Inst. Steklov.
  (POMI)}, 325:\penalty0 83--102, 2005.

\bibitem[Guha(2008)]{guha08}
Subharup Guha.
\newblock Posterior simulation in the generalized linear mixed model with
  semiparametric random effects.
\newblock \emph{Journal of Computational and Graphical Statistics},
  17(2):\penalty0 410--425, 2008.

\bibitem[Guha(2010)]{guha10}
Subharup Guha.
\newblock Posterior simulation in countable mixture models for large datasets.
\newblock \emph{Journal of the American Statistical Association},
  105(490):\penalty0 775--786, 2010.

\bibitem[Guha and Baladandayuthapani(2016)]{guhaveera}
Subharup Guha and Veerabhadran Baladandayuthapani.
\newblock A nonparametric bayesian technique for high-dimensional regression.
\newblock \emph{Electronic Journal of Statistics}, 10:\penalty0 3374--3424,
  2016.

\bibitem[Hamid~et al.(2009)]{hamid09}
Jemila~S. Hamid~et al.
\newblock Data integration in genetics and genomics: Methods and challenges.
\newblock \emph{Human Genomics and Proteomics}, 2009.

\bibitem[Hannah et~al.(2011)Hannah, Blei, and Powell]{hannah}
Lauren~A. Hannah, David~M. Blei, and Warren~B. Powell.
\newblock Dirichlet process mixtures of generalized linear models.
\newblock \emph{Journal of Machine Learning Research}, 12(Jun):\penalty0
  1923--1953, 2011.

\bibitem[Huang et~al.(2017)Huang, Chaudhary, and Garmire]{huang17}
Siijia Huang, Kumardeep Chaudhary, and Lana~X. Garmire.
\newblock More is better: Recent progress in multi-omics data integration
  methods.
\newblock \emph{Frontier in Genetics}, 8(84), 2017.

\bibitem[Kazanets(2016)]{cdkn2b}
Anna et~al. Kazanets.
\newblock Biochima et biophysica (bba)- reviews on cancer.
\newblock 1865(2):\penalty0 275--288, 2016.

\bibitem[Kirk et~al.(2012)Kirk, Griffin, Savage, Ghahramani, and Wild]{kirk}
Paul Kirk, Jim~E Griffin, Richard~S. Savage, Zoubin Ghahramani, and David~L
  Wild.
\newblock Bayesian correlated clustering to integrate multiple datasets.
\newblock \emph{Bioinformatics}, 28(24):\penalty0 3290--3297, 2012.

\bibitem[Lee et~al.(2008)Lee, SW, and PJ]{lee08}
H~Lee, Kong SW, and Park PJ.
\newblock Integrative analysis reveals the direct and indirect interactions
  between dna copy number alterations and gene expression changes.
\newblock \emph{Bioinformatics}, 24(7):\penalty0 889--96, 2008.

\bibitem[Lee et~al.(2013)Lee, M{\"u}ller, Zhu, and Ji]{lee13}
Juhee Lee, Peter M{\"u}ller, Yitan Zhu, and Yuan Ji.
\newblock A nonparametric bayesian model for local clustering with application
  to proteomics.
\newblock \emph{Journal of the American Statistical Association},
  108(503):\penalty0 775--788, 2013.

\bibitem[Lock and Dunson(2013)]{lock}
Eric~F. Lock and David~B. Dunson.
\newblock Bayesian consensus clustering.
\newblock \emph{Bioinformatics}, 29(20):\penalty0 2610--16, 2013.

\bibitem[Lock et~al.(2013)Lock, Hoadley, Marron, and Nobel]{JIVE}
Eric~F. Lock, Katherine~A. Hoadley, J.~S. Marron, and Andre~B. Nobel.
\newblock Joint and individual variation explained (jive) for integrated
  analysis of multiple data types.
\newblock \emph{Annals of Applied Statistics}, 7(1):\penalty0 523--542, 2013.

\bibitem[Maechler et~al.(2018)Maechler, Rousseeuw, and Struyf]{PAM}
Martin Maechler, Peter Rousseeuw, and Anja Struyf.
\newblock \emph{"Finding Groups in Data": Cluster Analysis Extended Rousseeuw
  et al.}
\newblock https://cran.r-project.org/web/packages/cluster/index.html, 2018.
\newblock R package version 2.0.7.1.

\bibitem[McCullagh and Nelder(1989)]{glm}
P.~McCullagh and J.~A.~FRS Nelder.
\newblock \emph{Generalized Linear Models}.
\newblock Springer, 2 edition, 1989.

\bibitem[Messan et~al.(2003)Messan, Ghoshal, Ghosh, and Ramamoorthi]{atisso}
Amewou-Atisso Messan, Subhashis Ghoshal, K.~Jayant Ghosh, and V.~R.
  Ramamoorthi.
\newblock Posterior consistency for semi-parametric regression problems.
\newblock \emph{Bernoulli}, 9(2):\penalty0 291--312, 2003.

\bibitem[Mo~et al.(2013)]{mo}
Q~Mo~et al.
\newblock Pattern discovery and cancer gene identification in integrated cancer
  genomic data.
\newblock \emph{Proceedings of the National Academy of Sciences of the United
  States of America}, 110(11):\penalty0 4245--50, 2013.

\bibitem[M{\"u}ller et~al.(2011)M{\"u}ller, Quintana, and Rosner]{muller11}
Peter M{\"u}ller, Fernando Quintana, and Gary~L. Rosner.
\newblock A product partition model with regression on covariates.
\newblock \emph{Journal of Computational and Graphical Statistics},
  20(1):\penalty0 260--278, 2011.

\bibitem[Oghabian et~al.(2014)Oghabian, Kilpinen, Hautaniemi, and
  Czeizler]{biclustrev}
Ali Oghabian, Sami Kilpinen, Sampsa Hautaniemi, and Elena Czeizler.
\newblock Biclutering methods : Biological relevance and application in gene
  expression analysis.
\newblock \emph{PLoS One}, 9(3):\penalty0 e90801, 2014.

\bibitem[Perman et~al.(1992)Perman, Pitman, and Yor]{perman92}
M~Perman, J~Pitman, and M~Yor.
\newblock Size-biased.
\newblock \emph{Probability Theory and related fields}, 92:\penalty0 21--39,
  1992.

\bibitem[Petralia et~al.(2012)Petralia, Rao, and Dunson]{petralia}
F.~Petralia, V.~Rao, and David Dunson.
\newblock Repulsive mixtures.
\newblock \emph{ArXiv e-prints, April 2012}, 2012.

\bibitem[Pitman(1995)]{pitman95}
Jim Pitman.
\newblock Exchangeable and partially exchangeable random partitions.
\newblock \emph{Probability Theory and related fields}, 105:\penalty0 145--158,
  1995.

\bibitem[Pitman and Yor(1997)]{pitman97}
Jim Pitman and Marc Yor.
\newblock The two-parameter poisson-dirichlet distribution dervied from a
  stable subordinator.
\newblock \emph{The Annals of Probability}, 25(2):\penalty0 855--900, 1997.

\bibitem[Quintana(2006)]{quintana06}
Fernando~A. Quintana.
\newblock A predictive view of bayesian clustering.
\newblock \emph{Journal of Statistical Planning and Inference},
  136(8):\penalty0 2407--2429, 2006.

\bibitem[Quintana and Iglesias(2003)]{quintana03}
Fernando~A. Quintana and Pilar~L. Iglesias.
\newblock Bayesian clustering and product partition models.
\newblock \emph{Journal of the Royal Statistical Society. Series B},
  65(2):\penalty0 557--574, 2003.

\bibitem[Rodriguez and Ghosh(2012)]{rodriguez12}
A~Rodriguez and K.~Ghosh.
\newblock Modeling relational data using nested infinite relational models.
\newblock \emph{Tech rep. Department of Applied Mathematics and Statistics,
  Univeristy of California; Santa Cruz}, 2012.

\bibitem[Rodriguez et~al.(2008)Rodriguez, Dunson, and Gelfand]{rodriguez08}
A~Rodriguez, B.~David Dunson, and AE~Gelfand.
\newblock The nested dirichlet process.
\newblock \emph{Journal of the American Statistical Association},
  103(483):\penalty0 1131--1144, 2008.

\bibitem[Rousseau and Mengersen(2011)]{rousseau}
Judith Rousseau and Kerrie Mengersen.
\newblock Asymptotic behavior of the posterior distribution in overfitted
  mixture models.
\newblock \emph{Journal of the Royal Statistical Society, Series B},
  73(5):\penalty0 689--710, 2011.

\bibitem[Roy and Teh(2009)]{roy09}
D.~M. Roy and YW. Teh.
\newblock The mondrian process.
\newblock \emph{Advances in Neural Information Processing Systems},
  21:\penalty0 1377--1384, 2009.

\bibitem[Savage et~al.(2010)Savage, Ghahramani, Griffin, Bernard, and
  Wild]{savage}
RS~Savage, Z~Ghahramani, JE~Griffin, J.~de la~Cruz Bernard, and DL~Wild.
\newblock Discovering transcriptional modules by bayesian data integration.
\newblock \emph{Bioinformatics}, 26(12):\penalty0 i158--67, 2010.

\bibitem[Shen et~al.(2009)Shen, Olshen, and Ladanyi]{shen}
Ronglai Shen, B.~Adam Olshen, and Marc Ladanyi.
\newblock Integrative clustering of multiple genomic data types using a joint
  latent variable model with application to breast and lung cancer subtype
  analysis.
\newblock \emph{Bioinformatics}, 25(22):\penalty0 2906--2912, 2009.

\bibitem[Simas et~al.(2010)Simas, Wagner, and V.]{simas}
B.~Alexandre Simas, Barreto-Souza Wagner, and Rocha~Andrea V.
\newblock Improved estimators for a general class of beta regression models.
\newblock \emph{Computational Statistics and Data Analysis}, 54(2):\penalty0
  348--366, 2010.

\bibitem[TCGA(2008)]{TCGA08}
TCGA.
\newblock Comprehensive genomic characterization defines glioblastoma genes and
  core pathways.
\newblock \emph{Nature}, 455(7216):\penalty0 1061--8, 2008.

\bibitem[Teh et~al.(2006)Teh, Jordan, Beal, and Blei]{teh05}
Yee~Whye Teh, Michael~I. Jordan, Matthew~J. Beal, and David~M. Blei.
\newblock Hierarchical dirichlet process.
\newblock \emph{Journal of the American Statistical Association},
  101(476):\penalty0 1566--1581, 2006.

\bibitem[Verhaak(2010)]{p53}
RG~et.al Verhaak.
\newblock Integrated genomic analysis identifies clinically relevant subtypes
  of glioblastoma characterized by abnormalities in pdgfra, idh1, egfr, and
  nf1.
\newblock \emph{Cancer Cell.}, 17(1):\penalty0 98--110, 2010.

\bibitem[Wei(2015)]{litrev2}
Yingying Wei.
\newblock Integrative analyses of cancer data: A review from a statistical
  perspective.
\newblock \emph{Cancer Informatics}, 14(S2):\penalty0 173--181, 2015.

\bibitem[Wenting~et al.(2013)]{ibag}
Wang Wenting~et al.
\newblock ibag:integrative bayesian analysis of high-dimensional multiplatform
  genomics data.
\newblock \emph{Bioinformatics}, 29(2):\penalty0 149--159, 2013.

\bibitem[Xu(2014)]{egfr2}
Hongsheng et~al. Xu.
\newblock Epidermal growth factor receptor in glioblastoma.
\newblock \emph{Oncology Letters}, 14(1):\penalty0 512--16, 2014.

\bibitem[Xu~et al.(2013)]{xu13}
Yanxun Xu~et al.
\newblock Nonparametric bayesian bi-clustering for next generation sequencing
  count data.
\newblock \emph{Bayesian Analysis}, 8(4):\penalty0 759--780, 2013.

\bibitem[Ying-Wooi et~al.(2015)Ying-Wooi, Genevera, Matthew, and
  Zhandong]{tcga2stat}
Wan Ying-Wooi, I.~Allen Genevera, L.~Anderson Matthew, and Liu Zhandong.
\newblock \emph{TCGA2STAT: Simple TCGA Data Access for Integrated Statistical
  Analysis in R}.
\newblock https://cran.r-project.org/web/packages/TCGA2STAT/index.html, 2015.
\newblock R package version 1.2.

\end{thebibliography}

\section*{Acknowledgements}
We would like to thank Dr Subharup Guha for sharing his work on \citet{guhaveera}, without which the current work wouldn't have been possible. This project was partly supported by NSF Grant 1416948. 
\end{document}